\newtheorem{example}{Example}
\newtheorem{proposition}{Proposition}
\newtheorem{lemma}{Lemma}
\newtheorem{theorem}{Theorem}
\newtheorem{corollary}{Corollary}
\theoremstyle{definition}
\newtheorem{definition}{Definition}
\newtheorem{notation}{Notation}
\DeclareMathOperator{\ord}{ord}
\DeclareMathOperator{\denom}{den}
\DeclareMathOperator{\dom}{lc}
\DeclareMathOperator{\Vect}{Vect}
\DeclareMathOperator{\pgcd}{gcd}
\DeclareMathOperator{\im}{Im}
\DeclareMathOperator{\re}{Re}
\def\hK{\hat{\mathbb K}}
\newcommand{\trsp}[1]{#1^\mathsf{T}} %transpose
\title{Reduction-Based Creative Telescoping for\\ Definite Summation
of
D-finite Functions}
\author{Hadrien Brochet \and Bruno Salvy}
\dedicatory{Dedicated to the memory of Marko Petkov\v sek}
\begin{document}

\begin{abstract}
Creative telescoping is an algorithmic method initiated by
Zeilberger to compute definite sums
by synthesizing summands that telescope, called certificates.
We describe a creative telescoping algorithm that computes telescopers
for definite sums of D-finite functions as well as
the associated certificates in a compact form. The algorithm relies on
a discrete analogue of the generalized Hermite reduction, or
equivalently, a generalization of the Abramov-Petkov\v sek
reduction.
We provide a Maple implementation with good timings on a variety of
examples.
\end{abstract}
\maketitle

\section{Introduction}
The algorithmic computation of definite sums originates in
Zeilberger's
algorithm in the 1990's~\cite{Zeilberger1990b,Zeilberger1991a,WilfZeilberger1992a}.
Initially designed to deal with
hypergeometric sums, his method of \emph{creative telescoping} has
been extended to differential settings~\cite{AlmkvistZeilberger1990,Zeilberger1990,Takayama1990,Takayama1990a} and next generalized to
the large
class of \emph{D-finite functions} by
Chyzak~\cite{Chyzak-2000-EZF}.
In order to compute a definite sum of 
$F
(t,x_1,\dots,x_m)$ with respect to~$t$, where
each $x_i$ is a variable with respect to which one can 
apply a linear operator $\partial_i$ (generally, differentiation or
shift or
$q$-shift operator), the
creative telescoping algorithm 
constructs identities of the form
\begin{equation}\label{eq:telescoping}
\sum_{\boldsymbol\alpha}{c_{\boldsymbol\alpha}
(x_1,\dots,x_m)\boldsymbol\partial^
{\boldsymbol\alpha}(F)}=G(t+1,x_1,\dots,x_m)-G(t,x_1,\dots,x_m).
\end{equation}
Here, the sum is over a finite number of
multi-indices~$\boldsymbol\alpha$ and we use the multi-exponent
notation $\boldsymbol\partial^{\boldsymbol\alpha}=\partial_1^
{\alpha_1}\dotsm\partial_m^{\alpha_m}$. In the
original version for hypergeometric summation, the
monomials $\boldsymbol\partial^{\boldsymbol\alpha}(F)$ are simply
successive shifts~$F(t,n),F(t,n+1),F(t,n+2),\dots$
of a hypergeometric sequence
$F(t,n)$. Identities obtained that way
can often be summed over~$t$. The right-hand
side telescopes by design. Since the coefficients~$c_
{\boldsymbol\alpha}$ do not depend on the variable~$t$, the left-hand
side results in an operator
applied to
the definite sum of~$F$. From there, other algorithms can be
applied to compute information on the sum.
The operator in the left-hand
side of~\cref{eq:telescoping} is called a
\emph{telescoper} of~$F$ and the function~$G$ in the right-hand side
is the corresponding
\emph{certificate}. Chyzak's algorithm also deals
with the differential analogue of
\cref{eq:telescoping} where the right-hand side is a derivative; it is
used
to compute information on definite integrals. Chyzak's algorithm, like
Zeilberger's, looks for telescopers with an increasing number of
monomials~$\boldsymbol\partial^{\boldsymbol\alpha}$ with indeterminate
coefficients~$c_{\boldsymbol\alpha}$ and determines~$c_
{\boldsymbol\alpha}$ such that a certificate~$G$ exists in the vector
space generated by the~${\boldsymbol\partial}^{\boldsymbol\beta}(F)$
for ${\boldsymbol\beta}\in\mathbb N^{m+1}$ over the field of rational
functions. The conditions of being
\emph{D-finite} is that this vector space has finite dimension, which
allows for the existence of algorithms based on linear algebra.
If no certificate exists, then the support is
increased
and the process is
iterated. This stops either when sufficiently many operators have been
found or when a prescribed bound on the orders is reached. (In the
original hypergeometric case, no bound on the order is
fixed \emph{a priori} and termination is
guaranteed for the family of \emph{proper} hypergeometric terms~\cite{WilfZeilberger1992a}.)
Efficiency issues with this approach have led to the development of
heuristics and a very useful Mathematica implementation
by Koutschan~\cite{Koutschan2010}. 

The most recent approach to deal with the
efficiency issues with creative telescoping was initiated by 
Bostan, Chen, Chyzak and many co-authors who developed 
a class of \emph{reduction-based algorithms}
\cite{BostanChenChyzakLi2010,BostanChenChyzakLiXin2013a,ChenHuangKauersLi2015,BostanDumontSalvy2016,ChenKauersKoutschan2016,ChenHoeijKauersKoutschan2018}. These algorithms
avoid the computation
of potentially large certificates. In the differential case, where
the right-hand side of \cref{eq:telescoping} is replaced by a
derivative
$\partial_t(G)$, the
principle is to use a variant of Hermite reduction to compute an
additive decomposition of each
monomial in the form
\begin{equation}\label{eq:differential-reduction}
\boldsymbol\partial^{\boldsymbol\alpha}(F)=R_
{\boldsymbol\alpha}(t,x_1,\dots,x_m)F+\partial_t(G_
{\boldsymbol\alpha}),
\end{equation}
where $R_{\boldsymbol\alpha}$ is a rational function with a
certain minimality property. A telescoper is found by looking for
a linear dependency between these rational functions for a family of
monomials~$\boldsymbol\partial^{\boldsymbol\alpha}$. 
The computation of the rational function~$R_
{\boldsymbol\alpha}$ by Hermite reduction works by getting
rid of multiple poles and isolating a polynomial part. This was
first done for the integration of bivariate rational functions
\cite{BostanChenChyzakLi2010}, of hyperexponential functions 
\cite{BostanChenChyzakLiXin2013a} and of mixed
hypergeometric-hyperexponential functions 
\cite{BostanDumontSalvy2016}. In  these three cases, the vector space
generated by the functions ${\boldsymbol\partial}^{\boldsymbol\beta}( F)$
for ${\boldsymbol\beta}\in\mathbb N^2$ has dimension only~1 over the
rational functions. For summation, the analogous problem for bivariate
hypergeometric sequences was solved by replacing the Hermite reduction
by a modified Abramov-Petkov\v sek reduction, thereby providing a
faster variant of Zeilberger's algorithm~\cite{ChenHuangKauersLi2015}.
For bivariate problems of dimension larger than~1, the method was
extended to the
integration
of bivariate algebraic functions \cite{ChenKauersKoutschan2016}, of
Fuchsian functions \cite{ChenHoeijKauersKoutschan2018} and more
recently of P-recursive sequences~\cite{ChenDuKauersWang2023,Du2023}
by means of
suitable integral bases. An extension 
to the integration of purely differential bivariate 
D-finite functions in
arbitrary dimension was first
achieved by turning the
differential equations satisfied by the function to be integrated 
into first-order differential systems; then, a variant of Hermite
reduction can be designed at the level of vectors of rational
functions~\cite{Van-Der-Hoeven2017}. This approach generalizes to
purely
differential D-finite functions in more variables~\cite{Hoeven2021}.

Another method relies on cyclic vectors and
allows the integration of arbitrary D-finite 
functions~\cite{Bostan_2018}. Without loss of generality, we assume that~$F$ is
a cyclic vector for $\partial_t$, which means that all monomials
$\boldsymbol\partial^
{\boldsymbol\alpha}(F)$ rewrite as $M_{\boldsymbol\alpha}(F)$ with
$M_{\boldsymbol\alpha}$ a
linear operator
in~$\partial_t$ only. (If $F$ is not a cyclic vector, one finds a
cyclic vector~$G$, $F=M_F(G)$ for some linear operator~$M_F$ in
$\partial_t$ only and the
rest of the reasoning is unchanged.) Next, for any rational
function~$u$ and any
linear operator~$M$ in~$\partial_t$,
repeated integration by parts
implies Lagrange's identity
\[uM(F)-M^*(u)F=\partial_t(P_M(F,u)),\]
where $M^*$ is the adjoint of $M$ and $P_M$ is linear in
$F,\partial_t(F),\dots$
and $u,\partial_t(u),\dots$. Thus, a reduction-based algorithm is
obtained by the additive decomposition of 
\cref{eq:differential-reduction} with $R_{\boldsymbol{\alpha}}$ a
solution of the \emph{generalized Hermite reduction}
\[R_{\boldsymbol{\alpha}}=M_{\boldsymbol{\alpha}}^*(1)\bmod
\operatorname{Im}(L^*).\]
The tools used in the reduction 
modulo the image of the linear differential operator $L^*$ are
classical techniques used when
looking for rational solutions of linear differential equations.

This method of integration using generalized Hermite reduction
extends to other contexts.
This has been done in high generality in a preprint by
van~der~Hoeven~\cite{vanderhoeven:hal-01773137}. Our approach here is
different: we focus on the case of summation only and give a
simple self-contained presentation of the corresponding algorithm; 
our algorithm returns operators of minimal order\footnote{Remark~5.6
in~\cite{vanderhoeven:hal-01773137} seems to allude to a way of
doing this, but the relevant space~$E$ may contain rational functions that are not in the image of $L$. For example take $L=1/z + 1/(z-1)\sigma^{-1}$, $\alpha = 0$, and $A =\{\alpha\}$, then one can check that $1/z \notin \im (L)$ but $1/z \in E$.}; we
make the
choice to avoid algebraic extensions when possible; we
present a Maple implementation that performs well in
practice. Note that while in terms of complexity, minimal operators
cannot be computed in polynomial time in general, in practice this
does not seem to be an obstacle.

\section{Example}
The multiplication theorem for Bessel functions of the first kind
$J_\nu$ states that~\cite[10.23.1]{OlverLozierBoisvertClark2010}
\begin{equation}\label{eq:multBessel}
J_\nu(\lambda z)=\lambda^\nu \sum_{n=0}^\infty \frac{(-1)^n(\lambda^2-1)^n(z/2)^n}{n!}J_{\nu+n}(z).
\end{equation}
This can be proved automatically by showing 
that the left-hand side and the right-hand side satisfy the same set
of mixed differential-difference equations with sufficiently many
identical initial conditions. 

We write $F$ for the summand in \cref{eq:multBessel}. It is a function
of the four variables $\nu,n,z,\lambda$. Basic properties of the
Bessel function give the following four equations:
\begin{align}
(\lambda^2-1)zS_\nu(F) + 2(n+1)S_n(F)  &= 0, \label{intro-exple-Snu} \\
(\lambda^2-1)\partial_\lambda(F) -2n\lambda F &= 0, \label{intro-exple-dlambda} \\ 
(1-\lambda^2)z\partial_z(F) + +2(n+1)S_n(F)
 + (\lambda^2-1)(2n+\nu)F &=
0, \label{intro-exple-dz}\\ 
4(n+1)(n+2)S_n^2(F) + 4(\lambda^2-1)(n+1)(n+\nu+1)S_n(F) + z^2
(\lambda^2-1)^2F &= 0, \label{intro-ex-Sn}
\end{align}
where $S_\nu$ denotes the shift with respect to~$\nu$: $S_\nu:G
(\nu)\mapsto G(\nu+1)$ and similarly for~$S_n$, while $\partial_z$
and $\partial_\lambda$ denote partial derivatives.
These equations show that any shift or derivative
$S_\nu^a\partial_\lambda^b\partial_z^cS_n^dF$ of $F$ with nonnegative
integers~$a,b,c,d$
rewrites
as a $\mathbb Q(\nu,\lambda,z,n)$-linear combination of $F$ and $S_n
(F)$. In particular, this implies that $F$
is D-finite with respect to these variables.
The aim of creative telescoping is to find a similar set of equations,
in the variables $\nu,\lambda,z$ only, for the sum in \cref{eq:multBessel}.

Let $\Delta_n$ be the \emph{difference operator} $\Delta_n=S_n-1$. Any
product $\phi(n)S_nF$ with $\phi\in\mathbb Q(\nu,\lambda,z,n)$ can
be
rewritten $\phi(n-1)F+\Delta_n(\phi(n-1)F)$, i.e., as the sum of a
rational function times~$F$ plus a difference, that would telescope
under summation. Consequently, any  $S_\nu^a\partial_\lambda^b\partial_z^cS_n^dF$
 with nonnegative
integers~$a,b,c,d$
rewrites in the form
\begin{equation}\label{eq:reduction}
\boldsymbol\partial^{\boldsymbol\alpha}(F)=R_
{\boldsymbol\alpha}F+\Delta_n(G_
{\boldsymbol\alpha}),
\end{equation}
with $R_{\boldsymbol\alpha}$ a rational function in $\mathbb Q
(\nu,\lambda,z,n)$. This reduction as a sum of a rational function
plus a difference is a general
phenomenon (see \cref{sec:cyclic}). 
Reduction-based creative telescoping works by reducing this
rational function further by pulling out parts that can be
incorporated into the difference $\Delta_n(G_{\boldsymbol\alpha})$. Denote by
$L$ the recurrence operator such that \cref{intro-ex-Sn} is $LF=0$.
The \emph{adjoint} of $L$ (see \cref{Lagrange}) is
\[
L^* = 4(n-1)nS_n^{-2} + 4(\lambda^2-1)n(n+\nu)S_n^{-1} + z^2(\lambda^2-1)^2
\]
where $S_n^{-1}:g(n)\mapsto g(n-1)$. \Cref{propCT} shows that a
rational
function $R$ is of the form $\Delta_n(M(F))$ for a
recurrence operator~$M(S_n)$ if and only if~$R$ is in the image $L^*(\mathbb Q
(\nu,\lambda,z,n))$. This is the basis for the computation of
relations of the form~\eqref{eq:reduction}
where $R_{\boldsymbol\alpha}$  is now a \emph{reduced} rational
function (in a sense made precise in \cref{CanForm}).
The next step is to look for linear combinations of these rational
functions that yield telescopers.

The starting point is the monomial $1$, which decomposes as
\begin{equation}
\label{exintro:cbl1}
 1\cdot F = 1\cdot F + \Delta_n(0).
\end{equation}
Using \cref{intro-exple-dlambda}, the monomial $\partial_\lambda$
rewrites
\begin{equation}
\label{exintro:cbl2}
\partial_\lambda(F) = \frac{2n\lambda}{\lambda^2-1}F + \Delta_n(0)
\end{equation}
and the rational function is reduced. Taking
the derivative of this equation and using  
\cref{intro-exple-dlambda} again gives a similar equation for
$\partial_\lambda^2(F)$:
\begin{align*}
\partial_\lambda^2(F) &= -\frac{2n(\lambda^2+1)}{(\lambda^2-1)^2}F + \frac{2n\lambda}{\lambda^2-1}\partial_\lambda(F) + \Delta_n(0), \\
&= -\frac{2n(\lambda^2+1)}{(\lambda^2-1)^2}F + \frac{(2n\lambda)^2}{(\lambda^2-1)^2}F + \Delta_n(0).
\end{align*}
This time, a reduction is possible. Indeed, \cref{propCT}
implies that $L^*(1)F$ is a difference~$\Delta(A_n)$ (where $A_n$ can
be computed explicitly). Since
\[
L^*(1) = 4\lambda^2n^2 + 4((\lambda^2-1)\nu - 1)n + z^2(\lambda^2-1)^2,
\]
we can eliminate the term in $n^2$  in the expression of
$\partial_\lambda^2(F)$ to get
\begin{equation}
\label{exintro:cbl3}
\partial_\lambda^2(F) = -\frac{2n(2\nu+1)}{(\lambda^2 - 1)}F -z^2F +
\Delta_n\left(\frac{A_n}{(\lambda^2-1)^2}\right).
\end{equation}
A simple linear combination of 
\cref{exintro:cbl1,exintro:cbl2,exintro:cbl3} then eliminates the term
in~$n$, showing
 that $F$ satisfies the equation
\[
\lambda\partial_\lambda^2F + (2\nu + 1)\partial_\lambda F + \lambda
z^2F = \Delta_n\left(\frac{-\lambda A_n}{(\lambda^2-1)^2}\right).
\]
The left-hand side is a telescoper. The right-hand side is a
certificate. It can be written more explicitly as
\[
\frac{-\lambda A_n}{(\lambda^2-1)^2}= - \frac{4(n\lambda^2 +
\lambda^2\nu - \nu - 1)n\lambda}{(\lambda^2 - 1)^2}F - \frac{4\lambda(n + 1)n}{(\lambda^2 - 1)^2} S_n(F).
\]
In general, summation and telescoping of the certificate requires 
verification. Here, we first observe that the certificate does not
have
integer poles and thus is well defined at all points over which it is
summed. Next, the certificate evaluates to zero at $n=0$. Finally,
it tends to zero when $n$ tends to infinity, as $J_{\nu+n}(z)$
decreases fast as $n\rightarrow\infty$ 
\cite[10.19.1]{OlverLozierBoisvertClark2010}.

In summary, we have obtained that the sum $S$ in the right-hand side
of \cref{eq:multBessel} satisfies
\[
\lambda\partial_\lambda^2(S) + (2\nu + 1)\partial_\lambda(S) + \lambda z^2S = 0.
\]
Proceeding similarly with \cref{intro-exple-Snu,intro-exple-dz}, one
gets the equations
\[
z\lambda S_\nu(S) + \partial_\lambda(S) = 0,\quad
z\partial_z(S) - \lambda\partial_\lambda(S) -\nu S =0.
\]
Injecting $T=J_\nu(\lambda z)/\lambda^\nu$ in these equations and
using
basic equations for~$J_\nu$ shows that it is a solution of this
system too. The proof of the multiplication theorem is concluded by
checking the equality of the initial conditions for $T$ and
for the sum on the
right-hand side of 
\cref{eq:multBessel}. As $\nu$ is associated to the shift, we need to
check initial conditions for any $\nu$ satisfying $0\leq\re(\nu) < 1$.
Indeed, both term of the identity equal $J_\nu(1)$ at $z=1,\lambda
=1$, and $\nu\in
[0,1)$ and both their derivatives
with
respect to~$\lambda$ equal $-J_{\nu+1}(1)$,  which proves the identity.

\section{Background}
In this section, we recall the basic framework for reducing
creative telescoping to the generalized Abramov-Petkov\v sek
decomposition. Most of this section is identical to the differential
case~\cite[Sec.~4]{Bostan_2018}, except for the existence
and computation of the
cyclic vector and the use of the recurrence
variant of Lagrange's identity~\cite{BarrettDristy}.
More gentle
introductions to Ore algebras,
creative
telescoping and their applications can be found in the references~\cite{Chyzak2014,ChyzakSalvy1998}.

\subsection{Telescoping ideal}
\subsubsection*{Ore algebras}
Let $\mathbf k$ be a field of characteristic~0, $x_0,\dots,x_m$ be
variables used to form the fields of rational functions $\mathbb
K=\mathbf k(x_1,\dots,x_m)$ and $\hK=\mathbb K(x_0)$. The \emph{Ore algebra}
$\mathbb A=\hK\langle \partial_0,\dots,\partial_m\rangle$
is a polynomial ring
over~$\hK$, with
$\partial_i\partial_j=\partial_j\partial_i$,
and a commutation between the $\partial_i$s and the elements
of~$\hK$ ruled by relations
\begin{equation}\label{eq:commutation}
\partial_i R=\sigma_i(R)\partial_i+\delta_i(R),\qquad R\in\hat{\mathbb
K},
\end{equation}
with $\sigma_i$ a ring morphism of~$\hK$ and $\delta_i$ a
$\sigma_i$-derivation, which means that $\delta_i(ab)=\sigma_i
(a)\delta_i(b)+\delta_i(a)b$ for all $a,b$ in~$\hK$
\cite{BronsteinPetkovsek1996,ChyzakSalvy1998}. The typical cases are when
$\partial_i$ is the differentiation $d/dx_i$ (then $\sigma_i$ is
the identity and $\delta_i=d/dx_i$) and the shift
operator~$x_i\mapsto x_i+1$ (then $\sigma_i(a)=a|_{x_i\leftarrow
x_i+1}$ and $\delta_i=0$).

\subsubsection*{Annihilating and D-finite ideals}
For a given function~$f$ in a left $\mathbb A$-module, the
annihilating ideal of~$f$ is the left ideal
$\operatorname{ann}f\subseteq\mathbb A$ of elements of~$\mathbb 
A$ that annihilate~$f$. 
A left ideal~$\mathcal I$ of~$\mathbb A$ is \emph{D-finite} when the
quotient $\mathbb{A}/\mathcal I$ is a finite dimensional $
\mathbb{K}$-vector space. A function is called D-finite when its
annihilating ideal is D-finite.

\subsubsection*{Telescoping ideal}
As we focus here on summation, from now on, when we use~$n$ and $S_n$,
they stand for $x_0$ and the corresponding shift
operator~$\partial_0:x_0\mapsto x_0+1$.

The \emph{telescoping ideal} $\mathcal{T}_{\mathcal I}$
of the left ideal~$\mathcal
I\subset\mathbb A$ with respect to~$n$ is
\begin{equation*}
\mathcal{T}_{\mathcal I} =\left(\mathcal I + \Delta_n(
\mathbb{A})\right) \cap {\mathbb{K}}\langle
\partial_1,\dots,\partial_m\rangle,\qquad \text{where}\quad
\Delta_n=S_n-1.
\end{equation*}
In other words, if $\mathcal I=\operatorname{ann}F$, 
the telescoping ideal $\mathcal T_\mathcal I$ is the set of
operators $T\in
\mathbb{K}\langle
\partial_1,\dots,\partial_m\rangle$ such that there exists $G\in
\mathbb{A}$ such
that $T + \Delta_nG \in \mathcal I$, or equivalently, such that 
\cref{eq:telescoping} holds (with $t=n$).

\subsection{Cyclic vector and Lagrange identity}\label{sec:cyclic}
\subsubsection*{Cyclic vector}
Let $\mathcal I$ be a D-finite ideal of $\mathbb A$ and let $r$ be the
dimension of the $\hK$-vector space $\mathbb B:=
\mathbb{A}/\mathcal I$. An
element~$\gamma\in \mathbb B$ is called \emph{cyclic
with respect to~$\partial_0$} if $
\{\gamma,\dots,\partial_0^{r-1}\gamma\}$ is a basis of
$\mathbb B$. In the differential case ($\partial_0=d/dx_0$), such a
vector always exists and can be computed efficiently when $\mathcal I$
is D-finite~\cite{ChurchillKovacic2002}. In the shift case 
($\partial_0:x_0\mapsto x_0+1$), even for a D-finite ideal~$\mathcal
I$, it is not
the case that there always exists a cyclic vector: in general,
$\mathbb B$ decomposes as the sum of a vector space where~$\partial_0$
is nilpotent and a part where it is cyclic~\cite{Jacobson1937}.
However, we have the following.
\begin{proposition}\cite[Thm.~B2]{HendricksSinger1999} With the
notation above, in the case when~$\partial_0$ is the shift operator
$x_0\mapsto x_0+1$, let $E=\trsp{(e_1,\dots,e_r)}$ be a basis of the
vector
space $\mathbb B=\mathbb A/\mathcal I$ and $A_0\in\hat{\mathbb K}^
{r\times r}$ be defined by
$\partial_0E=A_0E$. If $A_0$ is invertible, then there
exists a
cyclic vector with respect to~$\partial_0$ of the
form~$v=a_1e_1+\dots+a_re_r$ with polynomial
coefficients~$a_i\in\mathbb Z[x_0]$ of degree at most~$r-1$, and
coefficients all in~$\{0,\dots,r\}$.
\end{proposition}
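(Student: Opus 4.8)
The plan is to reduce cyclicity to the non-vanishing of a single Casoratian determinant, and then to meet the three quantitative demands (polynomial coefficients, degree at most $r-1$, coefficients in $\{0,\dots,r\}$) by a genericity argument sharpened through a grid form of the combinatorial Nullstellensatz. First I would coordinatize the action of $\partial_0$. Writing a candidate $v = a_1e_1+\dots+a_re_r$ as $a\trsp E$ with the row vector $a=(a_1,\dots,a_r)$, and using that in the shift case $\delta_0=0$, so that $\partial_0 R=\sigma_0(R)\partial_0$ for $R\in\hK$, I get $\partial_0 v=\sigma_0(a)A_0\trsp E$. Since $A_0$ has entries in $\mathbb K$ and $\sigma_0$ fixes $\mathbb K$, we have $\sigma_0(A_0)=A_0$, and iterating gives $\partial_0^j v=\sigma_0^j(a)A_0^j\trsp E$ for all $j$. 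Thus $\{v,\partial_0 v,\dots,\partial_0^{r-1}v\}$ is a basis of $\mathbb B$ precisely when the $r\times r$ matrix $M$ whose $j$-th row is $\sigma_0^{j}(a)A_0^{j}$ (for $j=0,\dots,r-1$) is invertible over $\hK$, i.e. $\det M\neq 0$.

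Next I would restrict to $a_i=\sum_{d=0}^{r-1}c_{i,d}\,x_0^{d}\in\mathbb K[x_0]$ and regard $\det M$ as a polynomial in the $r^2$ coefficients $c_{i,d}$ with coefficients in $\mathbb K[x_0]$. The crux is the lemma that this polynomial is not identically zero, equivalently that some polynomial vector of degree at most $r-1$ is cyclic. Here I would first observe that conjugating the basis by a matrix $C\in GL_r(\mathbb K)$ leaves the degree-$\le(r-1)$ structure intact (as $C$ is constant in $x_0$) and transforms the $c_{i,d}$ by an invertible linear map, while replacing $A_0$ by $CA_0C^{-1}$; so I may assume $A_0$ is in rational canonical form. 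For a non-derogatory block, constants already give a (matrix-)cyclic vector and the claim is immediate; the genuinely difficult and essential case is the derogatory one, epitomized by $A_0=\lambda I$, where no constant vector is cyclic and the infinite order of the shift $\sigma_0$ must be exploited. In that case $\det M$ factors through a Casoratian $\det\big(a_k(x_0+j)\big)_{j,k}$, and choosing $a_k=x_0^{k-1}$ turns it into a Vandermonde in the nodes $x_0,x_0+1,\dots,x_0+r-1$, which are pairwise distinct in characteristic $0$; this is exactly where degree $r-1$ is both necessary and sufficient. Assembling these pieces into a uniform statement for an arbitrary invertible $A_0$ — handling the remaining derogatory cases via confluent Casoratian determinants — is the part I expect to be the main obstacle.

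Finally, granting the lemma, I would control the bounds. Each entry $M_{jk}=\sum_l\sigma_0^j(a_l)(A_0^{j})_{lk}$ is affine in every $c_{i,d}$, so expanding $\det M$ over its $r$ rows shows $\deg_{c_{i,d}}\det M\le r$ for each of the $r^2$ variables. Since $\det M$ is a nonzero polynomial with degree at most $r$ in each variable and $\lvert\{0,\dots,r\}\rvert=r+1>r$, the grid version of the combinatorial Nullstellensatz yields a point $(c_{i,d})\in\{0,\dots,r\}^{r^2}$ at which $\det M\neq 0$ in $\hK$. The resulting $a_i\in\mathbb Z[x_0]$ then have degree at most $r-1$ and coefficients in $\{0,\dots,r\}$, and $v=\sum_i a_i e_i$ is the desired cyclic vector.
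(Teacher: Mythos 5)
Your outer scaffolding is sound, and it even reconstructs where the strange-looking bounds come from: the criterion $\det M\neq 0$, the observation that $\det M$ has degree at most $r$ in each coefficient $c_{i,d}$, and the grid form of the combinatorial Nullstellensatz over $\{0,\dots,r\}^{r^2}$ do yield degree at most $r-1$ and coefficients in $\{0,\dots,r\}$ \emph{once one knows} that $\det M$ is not the zero polynomial. But that nonvanishing is the entire content of the proposition, and it is exactly what your proposal does not establish: you prove it only for a non-derogatory constant $A_0$ and for the scalar case $A_0=\lambda I$ (via the Vandermonde/Casoratian computation), and you yourself flag the assembly of the general derogatory case (``confluent Casoratians'') as the expected main obstacle. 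Note that the paper offers no proof to fall back on --- it quotes the result from Hendricks and Singer, Thm.~B2 --- so a complete argument must handle an arbitrary invertible $A_0$; that is precisely where the cited proof does its real work. As it stands, the proposal is a correct reduction plus two worked special cases, not a proof.

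There is also a structural problem upstream of that gap. You take the statement's ``$A_0\in\mathbb K^{r\times r}$'' literally and deduce $\sigma_0(A_0)=A_0$, which is what collapses the iterates to $\sigma_0^j(a)A_0^j$ and licenses conjugating $A_0$ into rational canonical form by a constant matrix $C\in GL_r(\mathbb K)$. But $\mathbb B$ is a $\hK$-vector space, and in the paper's setting the entries of $A_0$ genuinely involve $x_0$: in the introductory Bessel example, the matrix of $S_n$ on the basis $\{F,S_nF\}$ has coefficients depending on $n$, so ``$\mathbb K^{r\times r}$'' must be read as $\hK^{r\times r}$. In that generality one has
\[
\partial_0^j v=\sigma_0^j(a)\,\sigma_0^{j-1}(A_0)\cdots\sigma_0(A_0)A_0\,\trsp{E},
\]
a change of basis acts by the gauge transformation $A_0\mapsto\sigma_0(C)A_0C^{-1}$ rather than by similarity, and there is no rational canonical form to normalize to; the derogatory/non-derogatory dichotomy on which your case analysis rests is not even invariantly defined. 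So the middle of your argument addresses a restricted version of the problem (constant $A_0$), and the two difficulties --- twisted products instead of powers, and arbitrary invariant-factor structure --- are the ones a full proof of the cited theorem has to overcome.
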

Sufficient conditions for the matrix~$A_0$ to be invertible are that
$\mathcal I=\operatorname{ann}f$ with $f$ in a $\hK
[\partial_0,\partial_0^{-1}]$-module~\cite{HendricksSinger1999} or
that $\mathcal I$ be a reflexive 
ideal~\cite{vanderhoeven:hal-01773137}. In practice, this condition
on~$A_0$ can be checked from the input and appears to be always
satisfied in the examples we have tried. From this proposition, the
computation of a cyclic vector
follows the same lines as that of the differential 
case~\cite{ChurchillKovacic2002}. Most often, $e_1=1$ is a cyclic
vector, which simplifies the rest of the computation.

\subsubsection*{Lagrange's identity}
For our purpose, the shift version of Lagrange's identity can be
viewed as giving an
explicit form of the result of the left Euclidean
division by the difference operator~$\Delta_n$, when applied to a
left multiplication
by a rational function.
\begin{proposition} \cite{BarrettDristy}
\label{Lagrange}
Let $u\in\hat{\mathbb K}$ and let $L=\sum_
{i=0}^ra_iS_n^i$
be an operator of order $r$ with $a_i$ in $\hat{\mathbb{K}}$. The
adjoint operator $L^*$ of $L$ is defined as
$L^*=\sum_{i=0}^{r} a_i(n-i)S_n^{-i}$ and it satisfies
\begin{equation}\label{eq:Lagrange}
	uL-L^*(u)=\Delta_n P_L(u)
\end{equation}
where
\begin{equation}\label{eq:bilinearconcomitant}
P_L(u(n))=\sum_{i=0}^{r-1}\left(\sum_{j=i+1}^r{a_j
(n+i-j)u(n+i-j)}\right)S_n^i.\end{equation}
\end{proposition}
Note that the term $L^*(u)$ denotes the evaluation of the operator
$L^*$ at the rational function $u$, rather than the product of $L^*$ by $u$.

Let $\gamma$ be a cyclic vector. Then any element of~$\mathbb B$
is
of the form $A\gamma$ with $A\in\hK\langle S_n\rangle$.
Applying Lagrange's identity with~$u=1$, $L=A$ and multiplying on the
right by~$\gamma$ shows
that this is a
rational multiple of~$\gamma$ up to a difference:
\begin{equation}\label{eq:basic-cert}
A\gamma=A^*(1)\gamma+\Delta_n P_A(1) \gamma.
\end{equation}

As in the differential
case, 
all computations in~$\mathbb B$ then reduce to $\hK$-linear
operations on
single rational
functions, rather than vectors of them, by the following analogue 
of~\cite[Prop.~4.2]{Bostan_2018}.
\begin{proposition} 
\label{propCT}
With the notation above, let $\gamma$ be a cyclic
vector of~$\mathbb B=\mathbb A/\mathcal I$ and for
all~$i=0,\dots,m$, let $B_i\in\hK\langle S_n\rangle$ be such that $\partial_i\gamma=B_i\gamma$. Then
for all $R\in\hK$,
\begin{gather}\label{eq:def-varphii}
\partial_iR\gamma=\varphi_i(R)\gamma + \Delta_n Q_i(R)\gamma,\\
\text{with}\quad
\begin{cases}
	\varphi_i(R)=B_i^*(R(x_i+1)),\quad Q_i(R)=P_{B_i}(R
	(x_i+1))&
	\text{if $\partial_i:x_i\mapsto x_i+1$;}\\
	\varphi_i(R)=B_i^*(R) + \frac{d}{dx_i}(R),\quad 
	Q_i(R)=P_{B_i}(R
	(x_i))&\text{if $\partial_i=d/d{x_i}$.}
\end{cases}\notag
\end{gather}
\end{proposition}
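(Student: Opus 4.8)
The plan is to establish the identity by a direct computation that combines the commutation rule \eqref{eq:commutation} with Lagrange's identity from \cref{Lagrange}, handling the shift and differential cases in parallel. The starting point is that $\partial_i R\gamma$ is the operator $\partial_i R$ applied to the cyclic vector $\gamma$, and the commutation rule lets me push $\partial_i$ to the right: $\partial_i R=\sigma_i(R)\partial_i+\delta_i(R)$. Substituting the defining relation $\partial_i\gamma=B_i\gamma$ then gives $\partial_i R\gamma=\sigma_i(R)B_i\gamma+\delta_i(R)\gamma$, where $\sigma_i(R)\in\hK$ is a rational function multiplying the operator $B_i\in\hK\langle S_n\rangle$ on the left. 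This reduces everything to an expression of the form (rational function)$\cdot B_i\gamma$ plus a correction term, which is exactly the shape to which Lagrange's identity applies.

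First I would treat the shift case $\partial_i:x_i\mapsto x_i+1$, where $\sigma_i(R)=R(x_i+1)$ and $\delta_i=0$, so the correction term vanishes and $\partial_i R\gamma=R(x_i+1)B_i\gamma$. This is precisely the left-hand side $u(n)L(v(n))$ of \eqref{eq:Lagrange} with $u=R(x_i+1)$, $L=B_i$ and $v=\gamma$. Rearranging Lagrange's identity gives $R(x_i+1)B_i\gamma=B_i^*(R(x_i+1))\gamma+\Delta_n(P_{B_i}(R(x_i+1),\gamma))$, which is the claimed formula with $\varphi_i(R)=B_i^*(R(x_i+1))$ and $Q_i(R)=P_{B_i}(R(x_i+1),\gamma)$. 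This is exactly the generalization of \eqref{eq:basic-cert} (the case $u=1$) to an arbitrary rational coefficient.

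For the differential case $\partial_i=d/dx_i$, I have $\sigma_i=\mathrm{id}$ and $\delta_i=d/dx_i$, so the commutation step yields $\partial_i R\gamma=R\,B_i\gamma+\frac{dR}{dx_i}\gamma$. Applying Lagrange's identity to the first summand with $u=R$ gives $R\,B_i\gamma=B_i^*(R)\gamma+\Delta_n(P_{B_i}(R,\gamma))$, and adding the surviving term $\frac{dR}{dx_i}\gamma$ produces the coefficient $\varphi_i(R)=B_i^*(R)+\frac{d}{dx_i}(R)$ with $Q_i(R)=P_{B_i}(R,\gamma)$, matching the stated formula (here $R(x_i)=R$ since no shift occurs).

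The computation is short, so the points that require attention are bookkeeping ones rather than a genuine obstacle. I must apply \eqref{eq:Lagrange} with the rational factor $\sigma_i(R)$ playing the role of $u$ and kept to the \emph{left} of $B_i$, so that it is not accidentally commuted past $S_n$; and I must verify that the shifted function $R(x_i+1)$, not $R$ itself, is what feeds into both $\varphi_i$ and $Q_i$ in the shift case. I should also record that Lagrange's identity, although stated for sequences, is a formal consequence of the Ore relations and hence holds verbatim in the left $\hK\langle S_n\rangle$-module $\mathbb B=\mathbb A/\mathcal I$, so that evaluating it at $v=\gamma$ is legitimate even though $\gamma$ is only a class. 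Finally, since $B_i^*$ is applied here to a rational function (not to $\gamma$), only the forward shifts $S_n^i\gamma$ occurring in $P_{B_i}$ enter the module, and no backward shift of $\gamma$ is ever needed.
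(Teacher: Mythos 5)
Your proof is correct and is essentially identical to the paper's own argument: multiply the commutation relation \eqref{eq:commutation} by $\gamma$ on the right, use $\partial_i\gamma=B_i\gamma$ to obtain $\partial_iR\gamma=\sigma_i(R)B_i\gamma+\delta_i(R)\gamma$, then apply Lagrange's identity \eqref{eq:Lagrange} with $L=B_i$, $u=\sigma_i(R)$, $v=\gamma$. The paper states this in two lines without splitting into the shift and differential cases; your explicit case analysis and the remark that Lagrange's identity holds formally in the module $\mathbb B$ are just elaborations of the same proof.
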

\begin{proof}Multiplying \cref{eq:commutation} by $\gamma$ on the
right and using the definition of~$B_i$ gives
\[\partial_iR\gamma=\sigma_i(R)B_i\gamma+\delta_i(R)\gamma.\]
The conclusion follows from Lagrange's identity~\eqref{eq:Lagrange}
applied with $L=B_i$ and $u=\sigma_i(R)$.
\end{proof}

\subsection{Canonical Form}\label{sec:canonical}
\Cref{propCT} shows how, given a cyclic vector~$\gamma$, all elements
of~$\mathbb B$ can be reduced to the product of~$\gamma$ by a rational
function, up to a difference in~$\Delta_n\mathbb B$. The starting
point of the reduction-based creative telescoping is that one can
actually identify those multiples that belong to~$\Delta_n\mathbb B$.
\begin{proposition}
\label{propCT2}
With the same hypotheses as in \cref{propCT}, let $L$ be a minimal-order
operator in $\hK\langle S_n \rangle$ annihilating $\gamma$,
ie, the product $L\gamma$ is~0 in~$\mathbb A/\mathcal I$, $L$ has
order~$r$ and no operator of order~$r-1$ has that property.
Then for all $R\in\hK$,
$
R\gamma\in\Delta_n(\mathbb B) \Longleftrightarrow R\in L^*(
\hK).
$
\end{proposition}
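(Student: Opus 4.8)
The plan is to prove the two implications separately, using Lagrange's identity \eqref{eq:Lagrange} for the easy direction and an adjoint-plus-evaluation argument for the other.

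For the implication $R \in L^*(\hK) \Rightarrow R\gamma \in \Delta_n(\mathbb B)$, I would write $R = L^*(g)$ with $g \in \hK$ and apply Lagrange's identity \eqref{eq:Lagrange} with $u = g$, $v = \gamma$, and the operator $L$. Since $L(\gamma)=0$ by minimality, the left-hand side of \eqref{eq:Lagrange} collapses to $-L^*(g)\gamma = -R\gamma$, exhibiting $R\gamma = \Delta_n\big(-P_L(g,\gamma)\big)$ as an explicit element of $\Delta_n(\mathbb B)$.

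The converse is the substantive direction. Assume $R\gamma = \Delta_n(H)$ for some $H \in \mathbb B$. Since $\gamma$ is cyclic, I would write $H = M\gamma$ for an operator $M \in \hK\langle S_n\rangle$; then $R\gamma = (S_n-1)M\gamma$, so $\big(R - (S_n-1)M\big)\gamma = 0$ and $R-(S_n-1)M$ lies in the left annihilator of $\gamma$ in $\hK\langle S_n\rangle$. Because $\hK\langle S_n\rangle$ is a left principal ideal domain and $L$ is the minimal-order annihilator of $\gamma$, that annihilator is the left ideal $\hK\langle S_n\rangle L$; hence $R = (S_n-1)M + PL$ for some $P \in \hK\langle S_n\rangle$. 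Taking adjoints, using that $*$ is an additive anti-automorphism, that $R^*=R$ for $R\in\hK$, and that $(S_n-1)^* = S_n^{-1}-1$, this becomes $R = M^*(S_n^{-1}-1) + L^*P^*$ in $\hK\langle S_n,S_n^{-1}\rangle$. The crucial step is to evaluate this operator identity on the constant sequence $1$: since $S_n^{-1}(1)=1$, the term $M^*(S_n^{-1}-1)$ annihilates $1$, while $R(1)=R$ and $(L^*P^*)(1)=L^*\big(P^*(1)\big)$. Setting $g := P^*(1)\in\hK$ then yields $R = L^*(g)$, i.e.\ $R \in L^*(\hK)$.

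The main obstacle is this converse direction, and within it the key device is the evaluation at the constant sequence, which is precisely what removes the difference term after passing to adjoints; one must check that $P^*(1)$ genuinely lands in $\hK$, which holds because each $S_n^{-j}$ fixes the constant $1$. A secondary point needing justification is that the minimal-order annihilator $L$ generates the entire annihilator left ideal, which rests on the Euclidean (principal left ideal) structure of $\hK\langle S_n\rangle$.
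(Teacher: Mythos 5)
Your proof is correct and follows essentially the same route as the paper's: Lagrange's identity for the easy direction, and for the converse, writing the certificate as $M\gamma$ via cyclicity, invoking minimality of $L$ (through the principal left ideal structure of $\hK\langle S_n\rangle$) to factor the annihilating operator as a left multiple of $L$, then taking adjoints and evaluating at the constant sequence $1$. Your write-up is in fact slightly more detailed than the paper's, which leaves the anti-automorphism property of the adjoint and the generation of the annihilator ideal implicit.
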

\begin{proof}
First, if $R=L^*(R')$ with
$R'\in
\hK$, Lagrange's
identity \eqref{eq:Lagrange} with $u=R'$ and $L\gamma=0$ implies that 
 $R\gamma=L^*(R')\gamma=\Delta_n G\gamma$ for~$G=-P_L(R')$.
Conversely, if $R\gamma \in \Delta_n(\mathbb B)$, there exists 
$M\in\hK\langle S_n \rangle$ such that $R\gamma = \Delta_n M
\gamma$.
The operator $\Delta_n M - R$ annihilates $\gamma$. By
minimality of $L$, 
there exists $N\in\hK\langle S_n \rangle$ such that 
$\Delta_n M - R=NL$. Taking the adjoint and evaluating at~1 gives
$R=M^*\Delta_n^* - L^*N^*$ and finally $R=-L^*(N^*(1))$.
\end{proof}
This proposition motivates the following.

\begin{definition}\cite{Bostan_2018}
\label{CanForm}
A \emph{canonical form} associated to $L^*$ is a ${\mathbb{K}}$-linear
map $[\cdot] : \mathbb{K}(n) \rightarrow \mathbb{K}(n)$ such that for
all~$R\in\mathbb K(n)$,
$[L^{*}(R)]=0$ and $R-[R]\in L^{*}(\mathbb{K}(n))$.
A rational function $R\in\mathbb K(n)$ is called \emph{reduced} when $
[R]=R$.
\end{definition}
The computation of canonical forms is the object of \cref{sec:reduction}.

\subsection{Creative Telescoping Algorithm via Canonical Forms\label{subsec:CTalg}}
\begin{algorithm}[t]
\caption{Creative Telescoping Algorithm}
\label{alg:myalg}
\begin{algorithmic}
\label{AlgoCT}
\REQUIRE Generators of a D-finite ideal $\mathcal I$
\ENSURE A basis of the telescoping ideal $\mathcal{T}_{\mathcal I}$ of
$\mathcal I$
\STATE $\gamma \gets$ a cyclic vector of $\mathcal B:=
\mathbb{A}/\mathcal I$ with respect to  $S_n$\hfill\COMMENT{See 
\cref{sec:cyclic}}
\STATE $ L \gets $ a minimal order operator in $S_n$ annihilating
$\gamma$\hfill\COMMENT{See \cref{sec:canonical}}
\STATE $\varphi_1,\dots,\varphi_n$ the maps described in \cref{propCT}
\STATE Initialize CanonicalForm  \hfill\COMMENT{See \cref{sec:reduction}}
\STATE $R_{\boldsymbol{0}} \gets$ CanonicalForm($A^*_1
(1),L^*)$\hfill\COMMENT{See \cref{subsec:CTalg}}
\STATE $\mathcal{L} \gets [1]$  \hfill \COMMENT{List of monomials in $\partial_1,\dots,\partial_m$ to visit}
\STATE $\mathcal{G} \gets \{\}$ \hfill \COMMENT{Gröbner basis}
\STATE $\mathcal{Q} \gets \{\}$ \hfill \COMMENT{Basis of the quotient}
\STATE $\mathcal{R} \gets \{\}$ \hfill \COMMENT{Set of reducible
monomials} \WHILE{$\mathcal{L}\neq \emptyset$}
\STATE Remove the first element $\boldsymbol{\partial}^{
\boldsymbol{\alpha}}$ of $\mathcal{L}$
\IF{ $\boldsymbol{\partial}^{\boldsymbol{\alpha}}$ is
not a multiple of an element of $\mathcal{R}$}
\IF{ $\boldsymbol{\partial}^{\boldsymbol{\alpha}}\neq 1$}
\STATE Take $i$ such that $\boldsymbol{\partial}^{\boldsymbol{\alpha}}/ \partial_i\in\mathcal{Q}$;
 $R_{\boldsymbol{\alpha}} \gets$ CanonicalForm($\varphi_j(R_{
\boldsymbol{\partial}^{\boldsymbol{\alpha}}/\partial_i}),L^*)$
\ENDIF
\IF{there exists a  ${\mathbb{K}}$-linear relation between $R_
{\boldsymbol{\alpha}}$ and $\{R_{\boldsymbol{\beta}} \mid
R_{\boldsymbol{\beta}} \in \mathcal{Q}\}$}
\STATE $(\lambda_{\boldsymbol{\beta}})_{R_{
\boldsymbol{\beta}}\in\mathcal{Q}} \gets$ coefficients of
the relation  $R_{\boldsymbol{\alpha}}=\sum_{R_{\boldsymbol{\beta}}\in\mathcal{Q}} \lambda_{
\boldsymbol{\beta}}R_{\boldsymbol{\beta}}$
\STATE Add ${\boldsymbol{\partial}}^{\boldsymbol{\alpha}}- \sum_
{R_{\boldsymbol{\beta}}\in
\mathcal{Q}}
\lambda_{\boldsymbol{\beta}}\boldsymbol{\partial}^{\boldsymbol{\beta}}$ to $
\mathcal{G}$;
Add $\boldsymbol{\partial}^{\boldsymbol{\alpha}}$ to $\mathcal{R}$
\ELSE 
\STATE Add $\boldsymbol{\partial}^{\boldsymbol{\alpha}}$ to $\mathcal{Q}$
\STATE\algorithmicfor\ {$j=1$ \TO $m$}\ \algorithmicdo\ 
{Append the monomial $\partial_j\boldsymbol{\partial}^{
\boldsymbol{\alpha}}$ to $\mathcal{L}$}
\ENDIF
\ENDIF
\ENDWHILE
\RETURN $\mathcal{G}$
\end{algorithmic}
\end{algorithm}

With the notation above, the creative telescoping algorithm 
from~\cite{Bostan_2018} applies verbatim. It is given in \Cref
{AlgoCT}. Its principle is to
iterate on every monomial of the form
$\partial_1^{\alpha_1}\dots \partial_m^{\alpha_m}$ by increasing order
for some monomial order, e.g., the grevlex order, and to compute the
reduced rational functions $R_{\boldsymbol{\alpha}}$ such that
\begin{equation}
\label{DecMonome2}
	\boldsymbol{\partial}^{\boldsymbol{\alpha}}= R_{
	\boldsymbol{\alpha}}\gamma + \Delta_n G_{
	\boldsymbol{\alpha}} \bmod\mathcal I.
\end{equation}
The rational function $R_{\boldsymbol{\alpha}}$ is obtained by
\begin{equation}\label{red-R}
\begin{cases}
R_{0,\dots,0}=[A^*_1(1)],\\
R_{\boldsymbol{\alpha}}=[\varphi_i(R_{\boldsymbol{\beta}})]\quad\text{
if }\boldsymbol{\partial}^{\boldsymbol{\alpha}}=\partial_i
\boldsymbol{\partial}^{
\boldsymbol{\beta}},
\end{cases}
\end{equation}
where $A_1^*$ is the adjoint of the operator $A_1$ verifying $1=A_1(\gamma)$.
When a monomial~$\boldsymbol\alpha$ is dealt with, two situations are
possible.
The corresponding~$R_{\boldsymbol{\alpha}}$ can be a linear
combination of the previous~$R_{\boldsymbol{\beta}}$.
In that case, that linear combination makes the corresponding
linear combination of~$\boldsymbol{\partial}^{\boldsymbol{\alpha}}$ and the
$\boldsymbol{\partial}^{\boldsymbol{\beta}}$
a newly discovered element of the telescoping ideal $
\mathcal{T}_\mathcal I$ and then it is not necessary to
visit the multiples of this monomial. Otherwise, $
\boldsymbol{\partial}^{\boldsymbol{\alpha}}$
is free from the previous ones and thus a new generator of~$\mathbb B$
has been found. The algorithm terminates when there are no more
monomials to visit.

The only difference with the differential case lies in the definition
of the canonical form $[\cdot]$ associated to the adjoint $L^*$ of the
minimal-order operator~$L\in\mathbb K(n)\langle S_n\rangle$
annihilating~$\gamma$. By \cref{propCT,propCT2} and the
definition of canonical form, \cref{DecMonome2} is satisfied and 
the following equivalence holds:
\begin{equation}
a_{\underline{1}}R_{\underline{1}} + \dots + a_{\underline{s}}R_{
\underline{s}} = 0 \quad \text{ iff } \quad a_{\underline{1}}D^{
\underline{1}} + \dots + a_{\underline{s}}D^{\underline{s}} \in 
\mathcal{T}_{\mathcal I}.
\end{equation}
The following result follows, with the same proof as in \cite{Bostan_2018}.
\begin{theorem}
Given as input the generators of a D-finite ideal $\mathcal I$ and
a cyclic vector~$\gamma$ for~$S_n$, 
Algorithm \ref{AlgoCT} terminates if and only if $\mathcal{T}_
{\mathcal I}$ is D-finite. Then, it outputs a Gröbner basis of $
\mathcal{T}_{\mathcal I}$ for the grevlex order.
\end{theorem}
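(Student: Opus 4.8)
\section*{Proof proposal}

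The plan is to recognize Algorithm~\ref{AlgoCT} as an FGLM-style linear-algebra traversal that walks the staircase of standard monomials of $\mathcal{T}_{\mathcal I}$ and reads off a Gröbner basis along the way. The engine that makes this work is the equivalence displayed just before the statement: a $\mathbb K$-linear relation $\sum_{\boldsymbol\alpha} a_{\boldsymbol\alpha} R_{\boldsymbol\alpha}=0$ holds if and only if $\sum_{\boldsymbol\alpha} a_{\boldsymbol\alpha}\boldsymbol\partial^{\boldsymbol\alpha}\in\mathcal{T}_{\mathcal I}$. This is a consequence of \cref{propCT,propCT2} and \cref{CanForm}: the reduced rational function $R_{\boldsymbol\alpha}$ is the canonical form attached to $\boldsymbol\partial^{\boldsymbol\alpha}$ through \cref{DecMonome2}, and two operators have the same image in $\mathbb B$ modulo $\Delta_n\mathbb B$ exactly when their canonical forms coincide. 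Thanks to this equivalence, every test and every relation manipulated by the algorithm transports from the module $\mathbb B$ to the finite-dimensional linear-algebra problem of detecting dependencies among the $R_{\boldsymbol\alpha}$, and the correctness argument becomes identical to the differential case of~\cite{Bostan_2018}.

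First I would establish the loop invariants. The $\partial_i$ commute with one another, so monomials in $\partial_1,\dots,\partial_m$ multiply like commutative monomials and grevlex is a well-founded monomial order on them. The invariants are: (i) every operator added to $\mathcal G$ lies in $\mathcal{T}_{\mathcal I}$ and has grevlex-leading monomial equal to the monomial $\boldsymbol\partial^{\boldsymbol\alpha}$ that produced it, the subtracted monomials $\boldsymbol\partial^{\boldsymbol\beta}\in\mathcal Q$ all being strictly smaller since they were processed earlier; (ii) the monomials in $\mathcal Q$ have $\mathbb K$-linearly independent canonical forms $R_{\boldsymbol\alpha}$, hence, by the equivalence, linearly independent images in $\mathbb K\langle\partial_1,\dots,\partial_m\rangle/\mathcal{T}_{\mathcal I}$; (iii) $\mathcal R$ is the set of leading monomials of $\mathcal G$, and $\mathcal Q$ is an order ideal (closed under division) whose complement, among already-processed monomials, is generated by $\mathcal R$. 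These are maintained jointly by induction on the grevlex processing order: (i) is immediate from the construction of the elements of $\mathcal G$ together with the equivalence; (ii) holds because a monomial enters $\mathcal Q$ precisely when its $R_{\boldsymbol\alpha}$ is independent from those already present; and (iii) uses that a strictly smaller divisor of a monomial in $\mathcal Q$ cannot have been declared reducible, for otherwise $\mathcal Q$ would contain a multiple of $\mathcal R$, against the skipping test.

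The crux, and the step I expect to be the main obstacle, is completeness at termination. When the loop stops, $\mathcal L$ is empty and every monomial is either in $\mathcal Q$ or a multiple of a monomial in $\mathcal R$. I would prove by grevlex induction that every monomial $\boldsymbol\partial^{\boldsymbol\alpha}$ reduces modulo $\mathcal G$ to a $\mathbb K$-linear combination of monomials of $\mathcal Q$. The delicate case is a proper multiple $\boldsymbol\partial^{\boldsymbol\gamma}\boldsymbol\partial^{\boldsymbol\alpha_0}$ of some $\boldsymbol\partial^{\boldsymbol\alpha_0}\in\mathcal R$, which the algorithm never visits, so one must argue that skipping it was legitimate. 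This rests on two facts: the leading-monomial ideal of a left ideal is a monomial ideal, so any multiple of a leading monomial is again a leading monomial; and in the Ore algebra, left multiplication by $\boldsymbol\partial^{\boldsymbol\gamma}$ shifts the leading monomial by $\boldsymbol\gamma$, since the morphisms $\sigma_i$ and the $\sigma_i$-derivations $\delta_i$ only act on coefficients and, in the differentiation case, produce strictly lower-order terms. Hence the generator $g_0\in\mathcal G$ with leading monomial $\boldsymbol\partial^{\boldsymbol\alpha_0}$ yields $\boldsymbol\partial^{\boldsymbol\gamma}g_0\in\mathcal{T}_{\mathcal I}$ with leading monomial $\boldsymbol\partial^{\boldsymbol\gamma}\boldsymbol\partial^{\boldsymbol\alpha_0}$, so this monomial is congruent modulo $\mathcal{T}_{\mathcal I}$ to strictly smaller ones, which reduce by induction. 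It follows that $\mathcal Q$ spans the quotient; combined with invariant~(ii) it is a basis, the monomials in $\mathcal R$ generate $\mathrm{lt}(\mathcal{T}_{\mathcal I})$, and therefore $\mathcal G$ is a grevlex Gröbner basis of $\mathcal{T}_{\mathcal I}$.

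Finally I would settle termination. If $\mathcal{T}_{\mathcal I}$ is D-finite, the quotient has finite dimension $d$; by invariant~(ii) the independent family indexed by $\mathcal Q$ has size at most $d$, so $\mathcal Q$ is finite, and since only the $m$ monomials $\partial_j\boldsymbol\partial^{\boldsymbol\alpha}$ are appended to $\mathcal L$ for each $\boldsymbol\partial^{\boldsymbol\alpha}$ entering $\mathcal Q$, the list receives finitely many monomials and the loop halts. Conversely, if the algorithm terminates then $\mathcal Q$ is finite, and by the completeness argument it is a basis of $\mathbb K\langle\partial_1,\dots,\partial_m\rangle/\mathcal{T}_{\mathcal I}$; hence the quotient is finite-dimensional, i.e.\ $\mathcal{T}_{\mathcal I}$ is D-finite. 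This proves both the equivalence ``terminates if and only if D-finite'' and, on termination, that the output is a Gröbner basis for grevlex.
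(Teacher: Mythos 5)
Your proposal is correct and follows essentially the same route as the paper: the paper's proof is a one-line deferral to the differential case of \cite{Bostan_2018}, whose argument is exactly the FGLM-style traversal you reconstruct, driven by the stated equivalence between $\mathbb K$-linear relations among the canonical forms $R_{\boldsymbol\alpha}$ and membership of the corresponding operator in $\mathcal{T}_{\mathcal I}$. Your write-up simply supplies the details (loop invariants, legitimacy of skipping multiples of $\mathcal R$ via leading monomials under left multiplication in the Ore algebra, and both directions of termination) that the paper leaves implicit by citation.
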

As in the differential situation \cite{Bostan_2018}, one can modify
the algorithm to compute
all elements of~$\mathcal T_{\mathcal I}$ up to a given bound
on their degree, or to return as soon as one telescoper is found, thus
allowing to recover a generating family of a
sub-ideal of $\mathcal{T}_F$.

\section{Generalized Abramov-Petkov\v sek decomposition}
\label{sec:reduction}
The main contribution of
this article is an algorithm for the computation of canonical
forms as in
\cref{CanForm} for the operator 
\[L^*= \sum_{i=0}^rp_i(n)S_n^{-i}\] 
with polynomial coefficients~$p_i\in\mathbb K[n]$. The
modified Abramov-Petkov\v sek decomposition~\cite{ChenHuangKauersLi2015} is a
special case of this reduction when $L$ has order 1 and once the shell~\cite{ChenHuangKauersLi2015} has been removed~\cite[Sec.~3.5.3]{Bostan_2018}.

The starting point is a decomposition of any rational
function~$R\in\mathbb K(n)$ in the form
\begin{equation}\label{eq:decomp-R}
R(n)= P_\infty(n) + \sum_{i,h} \frac{c_{i,h}(n)}{Q_i(n+h)^{\ell_
{i,h}}},
\end{equation}
with $\ell_{i,h}\in\mathbb N^*$, polynomials $P_\infty,Q_i$ and $c_
{i,h}$ in $\mathbb K[n]$ such that $\deg c_{i,h} < \ell_{i,h}\deg
Q_i$ and
$\gcd(Q_i(n+k),Q_j(n))=1$ for all~$k\in\mathbb Z$ when $i\neq j$. This
is discussed in \cref{sec:shiftless}. 

The vector spaces 
\[\mathbb{K}_{Q_i}(n)\stackrel{\text{def}}
{=}\Vect_\mathbb{K}\left( \frac{n^\ell}{Q_i(n+h)^j} \mid h\in
\mathbb{Z},
j\in\mathbb{N}^*, \ell< j\deg (Q_i) \right)\]
are in direct sum for distinct~$Q_i$ and are left invariant by $L^*$ modulo $\mathbb{K}[n]$.
This allows the reduction
algorithm to operate in each of the $\mathbb K_{Q_i}(n)$
independently. This is described in \cref{subsec:weak-polar,subsec:strong-polar}, before the reduction of
the remaining polynomial part in \cref{subsec:weak-poly,subsec:strong-poly}.

\begin{notation} For two integers $a,b$ with $a\le b$, we write
$\llbracket a;b\rrbracket$ for the set $\{a,a+1,\dots,b\}$.
\end{notation}

\subsection{Decomposition of rational functions}\label{sec:shiftless}
Recall that a polynomial~$Q$ is \emph{square-free} when it does not
have
multiple nontrivial factors. It is \emph{shift-free} when $\pgcd(Q
(n),Q(n+k))=1$ for all $k\in\mathbb Z^*$.

A
\emph{shiftless decomposition}
of a polynomial~$Q$ is a factorization of the form
\[
Q=\prod_{i=1}^v{\prod_{j=1}^{n_i}Q_i(n+h_{i,j})^{e_
{i,j}}},
\]
with $e_{i,j}\in\mathbb{N}^*, h_{i,j}\in\mathbb{Z}$,
and $Q_i\in\mathbb{K}[n]$ are such that each $Q_i$ is square-free and
$\pgcd(Q_i(n+k),Q_j(n))=1$ for all~$i,j$ and all $k\in\mathbb{Z}$
unless $i=j$ and $k=0$.
Such a factorization can be computed using only gcds, resultants
and
integer root finding~\cite{DBLP:conf/issac/GerhardGSZ03}.

Note that shiftless decompositions are not unique in general. One can
be
refined when a $Q_i$ is not irreducible, by splitting this
factor further. In particular, the linear factors of the~$Q_i$ can be
isolated and dealt with more easily.

A polynomial~$Q$ is \emph{refined with respect to a polynomial}~$P$
when it is such that for each
$h\in\mathbb Z$, there exists~$\ell\in\mathbb N$
such that $\gcd(P,Q(n+h)^{\ell+1})=Q(n+h)^\ell$.
A shiftless decomposition is called \emph{refined
with respect to $P$} when each $Q_i$ is.
This refinement can be computed using gcds only and will be used with
$P=p_0$ and $P=p_r$, the extreme coefficients of $L^*$.

From a shiftless decomposition, the partial fraction decomposition of 
\cref{eq:decomp-R} is then obtained by standard algorithms~\cite 
[5.11]{GathenGerhard2013}.

\subsection{Weak reduction of the polar part}\label{subsec:weak-polar}

\begin{algorithm}[t]
\caption{Weak reduction of poles $[\cdot]_{Q}$}
\begin{algorithmic}
\label{WRpolesR}
\REQUIRE $R,Q$
\ENSURE a reduced form of $R$

\WHILE{there exists $j<0$ (minimal) such that $Q(n-j) \mid \denom(R)$}
\STATE $R \gets R - L^*\left(\frac{A(n)}{Q(n-j)^{\ord_j(\denom
(R))+\ord_{j}(p_0)}}\right)$  with $A$ as in \cref{WRvalAbis}
\ENDWHILE

\WHILE{there exists $j\geq r$ (maximal) such that $Q(n-j) \mid \denom(R)$}
\STATE $R \gets R -  L^*\left(\frac{A'(n+r)}{Q(n-j+r)^{\ord_j(\denom(R))+\ord_{j}(p_r)}}\right)$  with $A'$ as in \cref{WRvalAbisp}
\ENDWHILE

\RETURN $R$
\end{algorithmic}
\end{algorithm}

\begin{lemma}
\label{WRlemma1bis}
Let $Q\in\mathbb K[n]$ be square-free, shift-free and refined with
respect to the
coefficients $p_0$ and $p_r$ of~$L^*$. Given
a rational
function $R\in
\mathbb{K}_{Q}
(n)$,
\cref{WRpolesR} computes a rational
function $[R]_{Q}\in\mathbb K_Q(n)$ with all its poles at zeros of $Q
(n-j)$ such
that $j\in\llbracket0;r-1\rrbracket$
and $R - [R]_{Q} = P+L^*
(T)$ for some $P\in\mathbb K[n]$ and $T\in\mathbb{K}_{Q}(n)$. 
The algorithm is
$\mathbb{K}$-linear.
\end{lemma}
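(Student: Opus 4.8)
The plan is to treat the two \textbf{while} loops separately, showing that each terminates while confining the poles of~$R$ to an ever smaller band of shifts of~$Q$, and then to assemble the two halves. Throughout I would exploit the invariance of $\mathbb K_Q(n)$ under $L^*$ modulo $\mathbb K[n]$ recalled above: every quantity subtracted in \cref{WRpolesR} has the shape $L^*(g)$ with $g\in\mathbb K_Q(n)$ (note $\deg A<(\ord_j(\denom(R)))\deg Q$ makes $A/Q(n-j)^k$ proper and in $\mathbb K_Q(n)$), so the running value of $R$ stays in $\mathbb K_Q(n)\oplus\mathbb K[n]$ and the total correction is $L^*(T)$ with $T\in\mathbb K_Q(n)$ the sum of all the~$g$. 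Writing the final iterate as $[R]_Q+P$ with $[R]_Q\in\mathbb K_Q(n)$ proper and $P\in\mathbb K[n]$ then yields $R-[R]_Q=P+L^*(T)$, which is the additive identity claimed.

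Next I would analyze one step of the first loop. Let $j<0$ be minimal (most negative) with $Q(n-j)\mid\denom(R)$, and set $k=\ord_j(\denom(R))+\ord_j(p_0)$. The key is the expansion
\[
L^*\!\left(\frac{A}{Q(n-j)^{k}}\right)=\sum_{i=0}^{r}p_i(n)\,\frac{A(n-i)}{Q(n-j-i)^{k}},
\]
whose $i$-th summand has its pole at shift index $j+i$. Only the term $i=0$ hits index~$j$: because $Q$ is refined with respect to~$p_0$, one has $p_0=Q(n-j)^{\ord_j(p_0)}\tilde p_0$ with $\tilde p_0$ coprime to the square-free $Q(n-j)$, hence invertible modulo every power of $Q(n-j)$, so the congruence defining $A$ (as in \cref{WRvalAbis}) has a solution that makes the $i=0$ contribution cancel the entire polar part of~$R$ at $Q(n-j)$. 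The summands $i=1,\dots,r$ produce poles only at the shift indices $j+1,\dots,j+r$, all strictly larger than~$j$; by shift-freeness of~$Q$ these denominators are pairwise coprime and coprime to $Q(n-j)$, so they neither disturb the cancellation at index~$j$ nor create any pole at an index $\le j$. Therefore $\mu(R)=\min\{j<0:Q(n-j)\mid\denom(R)\}$ strictly increases at each iteration; being a negative integer bounded above by~$0$ it increases only finitely often, which proves termination and shows that on exit no pole sits at a negative shift index.

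The second loop is symmetric, with $p_r$ and the summand $i=r$ playing the roles of $p_0$ and $i=0$: for the maximal $j\ge r$ with $Q(n-j)\mid\denom(R)$, the term $i=r$ of $L^*\!\bigl(A'(n+r)/Q(n-j+r)^{k}\bigr)$ is the unique one with a pole at index~$j$, and the refinement of~$Q$ with respect to~$p_r$ lets me factor $Q(n-j)^{\ord_j(p_r)}$ out of $p_r$ and solve for $A'$ exactly as before (\cref{WRvalAbisp}). The remaining summands create poles only at indices $j-r,\dots,j-1$, all strictly below~$j$ and, since $j\ge r$, all nonnegative; hence $\nu(R)=\max\{j\ge r:Q(n-j)\mid\denom(R)\}$ strictly decreases at each step and no negative shift index is reintroduced. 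This gives termination of the second loop and shows that afterwards every pole lies at a shift index in $\llbracket0;r-1\rrbracket$. Combining with the first loop, $[R]_Q\in\mathbb K_Q(n)$ has all its poles at zeros of $Q(n-j)$ with $j\in\llbracket0;r-1\rrbracket$, and the decomposition of the first paragraph supplies~$P$ and~$T$.

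Finally, $\mathbb K$-linearity follows because the whole procedure is a composition of $\mathbb K$-linear ingredients. For fixed $Q$ and $L^*$ one may view each loop as flushing the shift indices in a fixed order (increasing for the first loop, decreasing for the second); flushing an empty index is the identity, and flushing a nonempty index subtracts a term whose corrector $A$ (resp.\ $A'$) is obtained by solving a $\mathbb K$-linear congruence whose data depend $\mathbb K$-linearly on the polar part of~$R$ at that index, the different orders reconciling because the degree-bounded solution is additive under refinement of the modulus. Shift-freeness keeps the operations at distinct indices independent, so the composite map $R\mapsto[R]_Q$ is $\mathbb K$-linear. I expect the main obstacle to be precisely the bookkeeping that guarantees each processed index is cleared \emph{exactly} while the new poles land strictly inside the admissible band: this is where the refinement hypotheses on $p_0$ and~$p_r$ (to invert the relevant cofactors) and the shift-freeness of~$Q$ (to prevent distinct shifts from colliding) are indispensable.
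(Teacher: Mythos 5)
Your proof is correct and follows essentially the same route as the paper's: partial-fraction bookkeeping by shift index, cancellation of the extreme pole via the B\'ezout/invertibility argument enabled by the refinement of~$Q$ with respect to~$p_0$ and~$p_r$, termination by monotonicity of the extreme index, and step-by-step $\mathbb K$-linearity. (Your accounting of the newly created poles at indices $j+1,\dots,j+r$ is in fact slightly more accurate than the paper's $j+1,\dots,j+r-1$, and your linearity discussion is more detailed, but neither changes the argument.)
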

\begin{proof} 
Assume that $R$ decomposes as
\begin{equation}
\label{RRdec}
R=\sum_{j\in J} \frac{\lambda_j(n)}{Q(n-j)^{s_j}}\qquad
\text{with}\quad
\deg(\lambda_j(n)) < s_j\deg(Q).
\end{equation}
Let $j_m=\min(J)$ and $\ord_{j}(p_0)$ be the largest integer $\ell$
such that $Q(n-j)^\ell \mid p_0$. Then,
\[
{Q(n-j_m)^{s_{j_m}}}
L^*\left(\frac{1}{Q(n-j_m)^{s_{j_m}+\ord_{j_m}(p_0)}}\right)=
  \tilde{p}_0(n) \bmod Q(n-j_m)^{s_{j_m}},
\]
where $\tilde{p}_0(n)$ is the remainder in the Euclidean division of
$p_0/Q(n-j_m)^{\ord_{j_m}(p_0)}$ by $Q(n-j_m)^{s_{j_m}}$. The poles of
this rational function are at zeros of $Q(n-j)$ with $j\in \hat J:=
J\setminus\{j_m\}\cup(j_m+\llbracket1,r-1\rrbracket)$.

Since $Q$ is reduced with respect to~$p_0$, the polynomial~$\tilde
p_0(n)$ is relatively prime with~$Q(n-j_m)$. Thus, there exist
polynomials $A$ and $B$ such that
\begin{equation}
\label{WRvalAbis}
\lambda_{j_m}(n)=A(n)\tilde p_0(n) + B(n)Q(n-j_m)^{s_{j_m}}.
\end{equation}
Then
\[\frac{A(n)\tilde p_0(n)}{Q(n-j_m)^{s_{j_m}}}=
\frac{\lambda_{j_m}(n)}{Q(n-j_m)^{s_{j_m}}}-B(n),\]
so that
\[
R-  L^*\!\left(\frac{A(n)}{Q(n-j_m)^{s_{j_m}+\ord_{j_m}
(p_0)}}\right)
\]
is equivalent to~$R$ modulo $L^*(\mathbb K_Q(n))$ and with all its
poles at zeros of $Q(n-j)$ with $j\in \hat J$. This operation can be repeated a finite number of times
until all poles are at zeros of $Q(n-j)$ with $j\ge0$.

Similarly, let $j_M=\max(J)$ and $\ord_{j_M+r}(p_r)$ be the largest
integer $\ell$ such that $Q(n-j_M)^\ell$ divides $p_r(n-r)$. Then
\[
{Q(n-j_M)^{s_{j_M}}}
L^*\!\left(\frac{1}{Q(n-j_M+r)^{s_{j_M}+\ord_{j_M+r}(p_r)}}\right)=
  \tilde{p}_r(n) \bmod Q(n-j_M)^{s_{j_M}},
\]
where $\tilde{p}_r(n)$ is the remainder in the Euclidean division of
$p_r/Q(n-j_M)^{\ord_{j_M+r}(p_r)}$ by $Q(n-j_M)^{s_{j_M}}$. The poles
of
this rational function are at zeros of $Q(n-j)$ with $j\in \hat J':=
J\setminus\{j_M\}\cup(j_M-\llbracket1,r-1\rrbracket)$.
Again, since $Q$ is reduced with respect to~$p_r$, the
polynomial~$\tilde p_r$ is relatively prime with $Q(n-j_M)$. Thus
there exist two polynomials $A'$ and $B'$ such that
\begin{equation}
\label{WRvalAbisp}
\lambda_{j_M}(n)=A'(n)\tilde p_r(n) + B'(n)Q(n-j_M)^{s_{j_M}}
\end{equation}
so that 
\[
R -  L^*\left(\frac{A'(n+r)}{Q(n-j_M+r)^{s_{j_M}+\ord_{j_M+r}
(p_r)}}\right)
\]
is equivalent to~$R$ modulo $L^*(\mathbb K_Q(n))$ and with all its
poles at zeros of $Q(n-j)$ with $j\in \hat J'$. This operation can be repeated a finite number of times
until all poles are at zeros of $Q(n-j)$ with
$j\in\llbracket0,r-1\rrbracket$.

Each step being~$\mathbb K$-linear, so is the algorithm.
\end{proof}

\begin{example}
\label{expleWR}
Let
\begin{multline*}
R= 8nx+ \frac{n(8x^2 +x) - 16x - 1}{2(n - 1)^2} -\frac{4(x - 1)x^2}{n} \\
 - \frac{(4x^3 + 8x^2 - 31x - 32)n + 4x^3 - 31x - 32}{2(n+1)^2} +\frac{4x(x^2 - x - 8)}{n + 2} + \frac{2x^3}{n + 3}
\end{multline*}
and
\[
L^*=x^2(n-2)S_n^{-3} -n(4n^2 - x^2 - 4n)S_n^{-2} + n(4n^2 - x^2 - 4n)S_n^{-1} - x^2(n+2).
\]
The poles of $R$ are at $\{1,0,-1,-2,-3\}$. We take $Q=n+1$
and follow the steps of the algorithm.
 
The pole at $-3$ is easy: from
\[
L^*\!\left(\frac{1}{n+3}\right)=-x^2 + 4n + \frac{-x^2 + 8}{n + 1} + 
\frac{2x^2 - 16}{n + 2} + \frac{x^2}{n + 3}
\]
and the coefficient $2x^3$ of $(n+3)^{-1}$ in $R$, the algorithm
performs the subtraction
\[
R \gets R - 2xL^*\!\left(\frac{1}{n+3}\right)
 = \frac{n(8x^2 +x) - 16x - 1}{2(n - 1)^2} + \frac{4x^2}{n}  - \frac{(8x^2+ x - 32)n + x - 32}{2(n + 1)^2} -\frac{4x^2}{n + 2}.
\]
Next, the pole $-2$ is a simple root of the constant coefficient of
$L^*$, leading to the computation of 
\[
L^*\!\left(\frac{1}{(n+2)^2}\right)= \frac{x^2(n - 2)}{(n - 1)^2} + 
\frac{x^2}{n} - \frac{n(x^2 - 4) - 4}{(n + 1)^2} - \frac{x^2}{n + 2}
\]
so that the pole is removed by
\begin{equation}
R \gets R - 4L^*\!\left(\frac{1}{(n+2)^2}\right)
= \frac{x}{2(n - 1)} - \frac{x}{2(n + 1)}.\label{eq:ex-lasteq}
\end{equation}
$R$ now has all its poles in $\{-1,0,1\}$ and the weak reduction
is finished.
\end{example}

\subsection{Strong reduction of the polar part}
\label{subsec:strong-polar}
By \cref{WRlemma2}, 
the weak reduction produces rational functions all whose poles
differ from those of~$Q$ by an integer in $\llbracket
0,r-1\rrbracket$.
The next step of the reduction is to subtract rational functions in
$L^*(\mathbb K_Q(n))$ that have this property.

It turns out to be possible to focus on a finite-dimensional
subspace of 
$L^*(\mathbb{K}_Q(n))$ thanks to the following.
\begin{lemma}
\label{WRlemma2bis} 
If $j< 0$, $s > \ord_{j}(p_0)$ and $\ell<s\deg(Q)$ or
if $j \geq 0$, $s > \ord_{j+r}(p_r)$ and $\ell<s\deg(Q)$ 
then 
\[\left[L^*\!\left(\frac{n^\ell}{Q(n-j)^s}\right)\right]_{Q}=0.\]
\end{lemma}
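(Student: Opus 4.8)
The plan is to exploit the $\mathbb K$-linearity of $[\cdot]_Q$ and to treat the two hypotheses as mirror images of one another: the case $j<0$ is governed by the first (left) loop of \cref{WRpolesR} and by the coefficient $p_0$, while the case $j\ge 0$ is governed by the second (right) loop and by $p_r$, the two being interchanged by the reflection that swaps the roles of $p_0$ and $p_r(n-r)$ in the proof of \cref{WRlemma1bis}. I would therefore write out only the case $j<0$ in full and invoke symmetry for the other. The first observation is that, writing $R=L^*(n^l/Q(n-j)^s)=\sum_{i=0}^r p_i(n)(n-i)^l/Q(n-i-j)^s$, all poles of $R$ sit at positions in $\llbracket j,j+r\rrbracket$, and since $j\le -1$ this set is contained in $\llbracket j,r-1\rrbracket$. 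Hence no pole lies at a position $\ge r$, the right loop never fires, and it suffices to analyse the action of the left loop.

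Next I would pin down the leftmost pole. Because $Q$ is shift-free, the only term of $R$ contributing a pole at a zero of $Q(n-j)$ is the $i=0$ term $p_0(n)\,n^l/Q(n-j)^s$; since $Q$ is refined with respect to $p_0$ we may write $p_0=\hat p_0\,Q(n-j)^{\ord_j(p_0)}$ with $\pgcd(\hat p_0,Q(n-j))=1$, so this pole has order exactly $s-\ord_j(p_0)$, which is positive by the hypothesis $s>\ord_j(p_0)$. The heart of the argument is then the identity established in the proof of \cref{WRlemma1bis}, namely that $Q(n-j)^{s-\ord_j(p_0)}L^*\big(Q(n-j)^{-s}\big)\equiv\tilde p_0\pmod{Q(n-j)}$ with $\tilde p_0$ coprime to $Q(n-j)$. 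Solving the corresponding Bézout relation~\eqref{WRvalAbis} shows that the numerator $A$ selected by the algorithm satisfies $A\equiv n^l\pmod{Q(n-j)^{s-\ord_j(p_0)}}$, and the degree bound $l<s\deg Q$ guarantees that $n^l$ is an admissible numerator over $Q(n-j)^s$.

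From here the key computation is $R-L^*\big(A/Q(n-j)^s\big)=L^*\big((n^l-A)/Q(n-j)^s\big)$, where $n^l-A$ is divisible by $Q(n-j)^{s-\ord_j(p_0)}$; thus the difference is $L^*$ applied to a function whose pole at $j$ has order at most $\ord_j(p_0)$, and since $Q(n-j)^{\ord_j(p_0)}\mid p_0$ its $i=0$ term is then a polynomial, so the pole at position $j$ is cleared outright. I would then induct on the number of negative pole positions, re-applying the left loop to the new leftmost pole, to drive $R$ to $0$. The step I expect to be the main obstacle is exactly the subcase $\ord_j(p_0)>0$: there the first subtraction matches only the top $s-\ord_j(p_0)$ coefficients of the pole at $j$, and the leftover $(n^l-A)/Q(n-j)^s$ is spread by $L^*$ onto the positions $j+1,\dots,j+r$, some of which already fall inside the window $\llbracket 0,r-1\rrbracket$ and are therefore invisible to both loops. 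Controlling this cascade---and showing that these window contributions telescope away rather than surviving in $[R]_Q$---is the delicate point, and it is where the full strength of the degree hypothesis $l<s\deg Q$ must be propagated through the induction, simultaneously tracking the leftmost pole position and the degrees of the numerators produced at each step.
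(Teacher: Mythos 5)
Your setup is correct as far as it goes: the poles of $R=L^*\bigl(n^l/Q(n-j)^s\bigr)$ are confined to positions in $\llbracket j,j+r\rrbracket$, the leftmost pole has order exactly $s-\ord_j(p_0)$, and the Bézout cofactor satisfies $A\equiv n^l\pmod{Q(n-j)^{s-\ord_j(p_0)}}$. But the step you defer---controlling the cascade when $\ord_j(p_0)>0$---is not a technicality to be handled by a more careful induction: it is the whole content of the lemma, and the route you propose cannot close it. If the algorithm takes the minimal-degree solution of \cref{WRvalAbis}, as your plan assumes, the statement (with that normalization of $A$) is actually false. Take $L^*=(n+1)+S_n^{-1}$ (the adjoint of $L=(n+1)+S_n$, so $r=1$, $p_0=n+1$, $p_r=1$), $Q=n$, $j=-1$, $s=2$, $l=1$; the hypotheses hold since $\ord_{-1}(p_0)=1$. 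Then $R=L^*\bigl(n/(n+1)^2\bigr)=1-\frac{1}{n+1}+\frac{1}{n}-\frac{1}{n^2}$, the left loop fires at $j_m=-1$ with $s_{j_m}=1$, $\lambda_{-1}=-1$, $\tilde p_0=1$, and the minimal-degree cofactor is $A=-1$; subtracting $L^*\bigl(-1/(n+1)^2\bigr)$ leaves $1+\frac{1}{n}$, whose only pole sits at position $0$, inside the window $\llbracket 0,r-1\rrbracket$. Both loops halt and the polar part of the output is $1/n\neq 0$: exactly the surviving ``window contribution'' you were worried about. So no amount of tracking degrees and pole positions will make these contributions telescope away.

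The missing idea---which is the entirety of the paper's proof---is to use the freedom left in \cref{WRvalAbis}: that relation determines $A$ only modulo $Q(n-j_m)^{s_{j_m}}$, and $A=n^l$ itself is a valid solution, which is precisely what your congruence $A\equiv n^l$ shows; moreover the hypothesis $l<s\deg Q$ makes $n^l$ an admissible numerator over the denominator $Q(n-j)^{\ord_j(\denom(R))+\ord_j(p_0)}=Q(n-j)^{s}$ prescribed by \cref{WRpolesR}. With this choice the very first pass of the loop subtracts $L^*\bigl(n^l/Q(n-j)^s\bigr)=R$ itself, and the reduction terminates at $0$ in one step; the case $j\ge 0$ is symmetric, using the second loop with $A'(n+r)=n^l$. (In the example above this is the choice $A=n$, which indeed returns $0$.) You had all the ingredients for this one-line finish but committed to the minimal-degree normalization instead; the lesson is that the lemma holds for a suitable, legitimate choice of the cofactor in \cref{WRpolesR}, not for an arbitrary one, so the proof must make that choice rather than fight the cascade.
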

\begin{proof}
Let $j,s,\ell$ be three integers  that satisfy the first
assumption. Then $L^*({n^\ell}/{Q(n-j)^s})$ has a denominator
that is divisible by $Q(n-j)$ with $j<0$ by assumption. No smaller~$k$
is such that $Q(n-k)$ divides the denominator.
Thus the first pass through the first loop
of the weak reduction subtracts $L^*({n^\ell}/{Q(n-j)^s})$ to itself and reduces it to zero. 
When the second assumption is satisfied, then $L^*({n^\ell}/{Q(n-j)^s})$ has a denominator
that is divisible by $Q(n-(j+r))$ with $j+r\ge r$ by assumption. No
larger~$k$ is such that $Q(n-k)$ divides the denominator. Thus again,
the second loop reduces that fraction to~0.
\end{proof}
\begin{corollary}\label{WRlemma2} 
Let $I_0:=\{j\in\mathbb Z_{<0}\mid \gcd(p_0(n),Q(n-j))\neq1\}$
and $I_r:=\{j\in\mathbb Z_{\ge0}\mid \gcd(p_r(n),Q(n-j))\neq1\}$.
The $\mathbb K$-vector space $[L^*(\mathbb K_Q(n))]_Q$ is
generated by the fractions
\[\left[L^*\!\left(
\frac{n^\ell}{Q(n-j)^{s_j}}\right)\right]_Q,\qquad\text{with}\quad
\begin{cases}
j\in I_0\text{ and
 }1\le s_j\le\ord_j(p_0)\text{ and }0\le\ell<s_j\deg Q\\
 \text{ or }\\
j\in I_r\text{ and
 }1\le s_j\le\ord_{j+r}(p_r)\text{ and }0\le\ell<s_j\deg Q.
\end{cases}
\]
\end{corollary}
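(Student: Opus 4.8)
The plan is to read off the corollary from \cref{WRlemma2bis} once $\mathbb K$-linearity has been used to reduce everything to a generating family. First I would observe that the map $R\mapsto[L^*(R)]_Q$ is $\mathbb K$-linear, because $L^*$ is linear and, by \cref{WRlemma1bis}, so is the weak reduction $[\cdot]_Q$. Hence $[L^*(\mathbb K_Q(n))]_Q$ is spanned by the images of any generating family of $\mathbb K_Q(n)$. By the definition of $\mathbb K_Q(n)$, one such family is given by the fractions $n^\ell/Q(n-j)^s$ with $j\in\mathbb Z$, $s\ge1$ and $0\le\ell<s\deg Q$ (writing $j=-h$), so that
\[
[L^*(\mathbb K_Q(n))]_Q=\Vect_{\mathbb K}\left(\left[L^*\!\left(\frac{n^\ell}{Q(n-j)^s}\right)\right]_Q \;\middle|\; j\in\mathbb Z,\ s\ge1,\ 0\le\ell<s\deg Q\right).
\]

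Next I would discard the generators that vanish. \Cref{WRlemma2bis} states exactly that the generator indexed by $(j,s,\ell)$ is zero whenever $j<0$ and $s>\ord_j(p_0)$, or $j\ge0$ and $s>\ord_{j+r}(p_r)$; its remaining hypothesis $\ell<s\deg Q$ is automatic for our family. Removing these, the only possibly nonzero generators are those with $j<0$ and $1\le s\le\ord_j(p_0)$, together with those with $j\ge0$ and $1\le s\le\ord_{j+r}(p_r)$, in both cases with $0\le\ell<s\deg Q$. Since such an $s$ exists precisely when the corresponding order is at least~$1$, it remains only to identify the set of admissible~$j$.

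This last identification is where the refinement hypothesis enters, and it is the one step that requires care. For $j<0$, refinement of $Q$ with respect to $p_0$ provides an exponent $e\ge0$ with $\gcd(p_0,Q(n-j)^{e+1})=Q(n-j)^{e}$; hence either $e\ge1$ and $Q(n-j)\mid p_0$ (so $\ord_j(p_0)\ge1$ and $\gcd(p_0,Q(n-j))\ne1$), or $e=0$ and $\gcd(p_0,Q(n-j))=1$ (so $\ord_j(p_0)=0$). Thus $\ord_j(p_0)\ge1$ holds exactly when $j\in I_0$; refinement is what rules out the pathological case in which $Q(n-j)$ shares some but not all of its irreducible factors with $p_0$, a situation in which $\ord_j$---being divisibility by a full power of $Q(n-j)$---would fail to register the common factor. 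Running the same argument with $p_r$ in place of $p_0$---while tracking the shift by~$r$ incorporated in both $\ord_{j+r}(p_r)$ and the definition of $I_r$---shows that $\ord_{j+r}(p_r)\ge1$ holds exactly when $j\in I_r$. Substituting the index sets $I_0$ and $I_r$ into the family of the first paragraph yields precisely the family in the statement. Finally, as $p_0$ and $p_r$ have finitely many roots, $I_0$ and $I_r$ are finite and the exponents $s,\ell$ are bounded, so the generating family is finite and $[L^*(\mathbb K_Q(n))]_Q$ is finite-dimensional.
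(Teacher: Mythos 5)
Your proof is correct and follows exactly the route the paper intends: the corollary is stated there without an explicit proof, as an immediate consequence of \cref{WRlemma2bis} combined with the $\mathbb K$-linearity of $R\mapsto[L^*(R)]_Q$ applied to the standard spanning family of $\mathbb K_Q(n)$, which is precisely your argument. Your extra step using the refinement hypothesis to identify $\{j<0 : \ord_j(p_0)\ge 1\}$ with $I_0$ (and its analogue, with the $r$-shift bookkeeping, for $I_r$) is the right way to make explicit what the paper leaves implicit.
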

\Cref{WRlemma2} gives a generating family of the finite dimensional $\mathbb K$-vector space $[L^*(\mathbb K_Q(n))]_Q$. 
These rational functions can be written in the basis $(n^i/Q
(n-j)^k)_{i,k\in\mathbb N,j\in \mathbb Z}$ and one can then compute an
echelon basis of this finite-dimensional space.
This precomputation step corresponds to the computation of the $B_{Q_i}$'s in \cref{AlgoPrecomputation}.
 The \emph{strong reduction} of a rational
function~$R\in\mathbb K_Q$ then consists in reducing $
[R]_Q$ with this echelon basis. 
By this process, we obtain the
following.
\begin{proposition}
\label{SRlemma}
Strong reduction reduces every rational function $R\in L^*(\mathbb{K}_Q(n))$ to a polynomial in $\mathbb{K}[n]$.
\end{proposition}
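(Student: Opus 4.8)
The plan is to combine the weak reduction of \cref{WRlemma1bis} with the finite description of $[L^*(\mathbb{K}_Q(n))]_Q$ from \cref{WRlemma2}. Fix $R\in L^*(\mathbb{K}_Q(n))$, say $R=L^*(S)$ with $S\in\mathbb{K}_Q(n)$, and split $R=R_0+P_0$ into its proper-fraction part $R_0\in\mathbb{K}_Q(n)$ and its polynomial part $P_0\in\mathbb{K}[n]$. Since $P_0$ is already a polynomial, it suffices to prove that the strong reduction kills the polar part $R_0$, i.e., that the rational function it returns has no pole.

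First I would apply \cref{WRlemma1bis} to $R_0$ to obtain $[R]_Q\in\mathbb{K}_Q(n)$, all of whose poles lie at zeros of $Q(n-j)$ with $j\in\llbracket0;r-1\rrbracket$ and with $R_0-[R]_Q\in\mathbb{K}[n]+L^*(\mathbb{K}_Q(n))$. The key observation is that the weak reduction is $\mathbb{K}$-linear and depends only on the proper-fraction part of its argument; since $R_0$ and $L^*(S)$ share the same proper-fraction part, $[R]_Q=[L^*(S)]_Q$. By definition the latter belongs to the $\mathbb{K}$-vector space $[L^*(\mathbb{K}_Q(n))]_Q$, which \cref{WRlemma2} shows to be finite-dimensional with the explicit generators listed there; the strong reduction operates with a basis $\mathcal{B}$ of exactly this space.

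The remaining steps are bookkeeping. Because $[R]_Q\in\Vect_\mathbb{K}(\mathcal{B})$, the strong reduction expresses $[R]_Q$ as a $\mathbb{K}$-linear combination of $\mathcal{B}$ and subtracts it, so the surviving polar part is $0$. To upgrade "no polar part" to "is a polynomial" I would track what has been subtracted: every term removed in the weak pass is a value of $L^*$ and hence lies in $L^*(\mathbb{K}_Q(n))$, while each basis element $b=[L^*(f_b)]_Q$ differs from $L^*(f_b)$ by an element of $\mathbb{K}[n]+L^*(\mathbb{K}_Q(n))$ (apply \cref{WRlemma1bis} to $f_b$). Hence the output differs from $R$ by an element of $\mathbb{K}[n]+L^*(\mathbb{K}_Q(n))$ and, having trivial proper-fraction part, is a polynomial in $\mathbb{K}[n]$.

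The step I expect to carry the whole argument is the identification $[R]_Q\in[L^*(\mathbb{K}_Q(n))]_Q$: it is what guarantees that the single vector $[R]_Q$ is captured exactly by the precomputed basis $\mathcal{B}$, so that strong reduction returns $0$ rather than an arbitrary reduced remainder. The only delicate point is the handling of the polynomial parts, which is why it matters that both reduction passes subtract only elements of $\mathbb{K}[n]+L^*(\mathbb{K}_Q(n))$; everything else is linear algebra over $\mathbb{K}$.
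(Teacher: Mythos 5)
Your proof is correct and follows essentially the same route as the paper, which states this proposition without a separate proof precisely because it is meant to follow from the $\mathbb{K}$-linearity of the weak reduction together with \cref{WRlemma2}: since $[R]_Q$ lies in the finite-dimensional space $[L^*(\mathbb{K}_Q(n))]_Q$ spanned by the precomputed basis, reducing it modulo that basis leaves nothing but a polynomial. Your additional bookkeeping (splitting off the polynomial part, and checking that everything subtracted lies in $\mathbb{K}[n]+L^*(\mathbb{K}_Q(n))$) simply makes explicit what the paper leaves implicit.
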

\begin{example}\label{expleSR}
With the same notation as in example \ref{expleWR}, \cref{WRlemma2}
shows that $[L^*(\mathbb{K}_{n+1}(n))]_{n+1}$ is generated by 
\[
\left[L^*\left(\frac{1}{n-2}\right)\right]_{n+1}=4n + \frac{x^2}{n +
1}
- \frac{x^2}{n - 1},\qquad
\left[L^*\left(\frac{1}{n-1}\right)\right]_{n+1}=4n + \frac{x^2}{n -
1} - \frac{x^2}{n + 1}.
\]
Thus the strong reduction of the rational function~$R$ from 
\cref{eq:ex-lasteq} is the polynomial
\[
R+ L^*((n-2)^{-1})/(2x)=-2n/x,
\]
concluding the reduction.
\end{example}

\subsection{Weak reduction of polynomials}\label{subsec:weak-poly}
\begin{algorithm}[t] 
\caption{Weak reduction of polynomials $[\cdot]_\infty$}
\begin{algorithmic}
\label{WRpol}
\REQUIRE $P$ and ($\sigma$,$p$) from \cref{eq:indpolinfty}
\ENSURE a reduced form of $P$
\STATE $a \gets 0$
\WHILE{$\deg(P) \geq \sigma$}
\STATE\algorithmicif{ $\deg(P)-\sigma$ is a root of $p$}
\algorithmicthen\ $a \gets a + \operatorname{lt}(P)$; $P\gets P - 
\operatorname{lt}(P)$
\STATE\algorithmicelse\ 
$P \gets P - \frac{\dom(P)}{p(\deg(P)-\sigma)}L^*(n^{\deg(P)-\sigma})$
\ENDWHILE
\RETURN  $a+P$
\end{algorithmic}
\end{algorithm}
The weak reduction of polynomials is a direct adaptation of the differential case 
\cite{Bostan_2018}.
The indicial polynomial of $L^*$ at infinity is the polynomial $p\in
\mathbb{K}[s]$ defined by
\begin{equation}\label{eq:indpolinfty}
L^*(n^s)=n^{s+\sigma}(p(s) + O(1/n)),
\end{equation}
with $\sigma\in\mathbb{N}$.
The ensuing weak reduction is presented in \cref{WRpol}.

\begin{example}
\label{expleSRP}
In \cref{expleWR,expleSR},
the indicial equation at infinity is 
\[
L^*(n^s)=n^{s+2}(8+4s +O(1/n)).
\]
The polynomial $-2n/x$ found in \cref{expleSR} cannot be reduced
further by weak reduction since its degree in~$n$ is smaller than~2.
\end{example}
The following properties are proved exactly as those for weak
reduction at a pole.
\begin{lemma}
\label{WRpollemma1}
Algorithm  $[\cdot]_\infty$ terminates and is $\mathbb{K}$-linear.
For all $P\in\mathbb{K}[n]$, there exists $Q\in\mathbb{K}[n]$ such that $P-[P]=L^{*}(Q)$.
If $s\in\mathbb{N}$ is not a root of  $p$, then $[L^*(n^s)]_\infty=0$.
\end{lemma}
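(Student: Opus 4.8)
The plan is to rest everything on the single structural fact recorded by the indicial polynomial~\eqref{eq:indpolinfty}: since $L^*=\sum_{i=0}^r p_i(n)S_n^{-i}$ has polynomial coefficients it maps $\mathbb K[n]$ into itself, and for every $d\in\mathbb N$ the polynomial $L^*(n^d)$ has degree exactly $d+\sigma$ with leading coefficient $p(d)$. First I would deduce termination from this. In the \textbf{else} branch the subtracted term $\tfrac{\dom(P)}{p(\deg(P)-\sigma)}L^*(n^{\deg(P)-\sigma})$ has degree $\deg(P)$ and leading coefficient $\dom(P)$, so it cancels $\lt(P)$ exactly; in the \textbf{if} branch $\lt(P)$ is removed explicitly. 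Hence every iteration strictly decreases $\deg(P)$, and as the loop runs only while $\deg(P)\ge\sigma$, it halts after at most $\deg(P)-\sigma+1$ steps.

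For the representation $P-[P]=L^*(Q)$, I would carry the loop invariant
\[
P_{\mathrm{init}}=a+P+L^*(S),
\]
where $S\in\mathbb K[n]$ accumulates the arguments $\tfrac{\dom(P)}{p(\deg(P)-\sigma)}n^{\deg(P)-\sigma}$ produced in the \textbf{else} branch and is untouched by the \textbf{if} branch. It holds initially with $a=S=0$ and is preserved by both branches by direct inspection. Since the returned value is $a+P=[P]_\infty$, at termination the invariant gives $P_{\mathrm{init}}-[P]_\infty=L^*(S)$, i.e.\ the claim with $Q=S$. The last assertion is then immediate: when $s\in\mathbb N$ is not a root of $p$, the input $L^*(n^s)$ has degree $s+\sigma$ and the first iteration enters the \textbf{else} branch (as $p(s)\ne0$), subtracting $\tfrac{p(s)}{p(s)}L^*(n^s)$ and reducing $P$ to $0$, so the algorithm returns $0$.

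The main point requiring care is $\mathbb K$-linearity, because the control flow --- which branch is taken and how many iterations occur --- depends on the input. To handle this I would replace the degree-driven loop by the equivalent schedule that, for a fixed bound $D\ge\deg(P)$, visits the degrees $e=D,D-1,\dots,\sigma$ once each: processing degree $e$ reads the coefficient $c_e$ of $n^e$ in the current $P$ and performs $a\gets a+c_en^e,\ P\gets P-c_en^e$ if $p(e-\sigma)=0$, and $P\gets P-\tfrac{c_e}{p(e-\sigma)}L^*(n^{e-\sigma})$ otherwise. Because every reduction only alters coefficients of degree strictly below $e$, each degree is touched exactly once and this schedule reproduces \cref{WRpol}; moreover both alternatives are $\mathbb K$-linear in the pair $(P,a)$ (coefficient extraction is linear and $L^*(n^{e-\sigma})$ is a fixed polynomial), and a step with $c_e=0$ is the identity. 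Thus on the space of polynomials of degree at most $D$ the output $a+P$ is a composition of fixed $\mathbb K$-linear maps, independent of $P$; letting $D\to\infty$ yields $\mathbb K$-linearity on all of $\mathbb K[n]$. Concretely, this identifies $[\cdot]_\infty$ as the projection onto $\Vect_{\mathbb K}\{n^e : e<\sigma \text{ or } p(e-\sigma)=0\}$ along $\Vect_{\mathbb K}\{L^*(n^d) : d\in\mathbb N,\ p(d)\ne0\}$, parallel to the situation of \cref{WRlemma1bis}.
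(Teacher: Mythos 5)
Your proof is correct, and for most of the lemma it follows the route the paper intends: the paper's own proof is a single sentence deferring to the proof of \cref{WRlemma1bis}, i.e.\ termination because each step cancels the leading term by subtracting an explicit element of $L^*(\mathbb{K}[n])$, the representation $P-[P]_\infty=L^*(Q)$ because such subtractions are the only modifications made, and the last claim by observing that the first iteration subtracts $L^*(n^s)$ from itself (the analogue of \cref{WRlemma2bis}). Your termination count, loop invariant $P_{\mathrm{init}}=a+P+L^*(S)$, and one-step computation are exactly this. Where you genuinely go beyond the paper is linearity: the paper settles for ``each step being $\mathbb{K}$-linear, so is the algorithm,'' which is delicate precisely because the branch taken and the number of iterations depend on the input. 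Your fixed-schedule reformulation (visit each degree $e=D,\dots,\sigma$ once, a zero coefficient being a no-op) resolves this cleanly, and the resulting description of $[\cdot]_\infty$ as the projection onto $\Vect_{\mathbb{K}}\{n^e : e<\sigma \text{ or } p(e-\sigma)=0\}$ along $\Vect_{\mathbb{K}}\{L^*(n^d) : p(d)\neq 0\}$ makes both linearity and idempotence transparent; the paper never states this characterization. One nitpick: your opening assertion that $L^*(n^d)$ has degree exactly $d+\sigma$ with leading coefficient $p(d)$ for \emph{every} $d\in\mathbb{N}$ is false when $p(d)=0$ (the degree then drops); you only ever invoke it when $p(d)\neq 0$, so nothing breaks, but it should be stated with that proviso.
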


\subsection{Strong reduction of polynomials}\label{subsec:strong-poly}
The final step is to subtract polynomials in $L^*(\mathbb K(n))$. Here
again, a finite number of generators can be obtained thanks to the
following.

\begin{lemma}
\label{SRlemmapol}
Let $Q_1,\dots,Q_v$ be the polynomials that occur in a
shiftless decomposition of $p_0p_r$ and let $P$ be a polynomial in
$L^*(\mathbb{K}(n))$. Then
\[ 
P \in E_{\text{pol}}\stackrel{\text{def}}{=}L^*(\mathbb{K}[n])+\sum_
{i=1}^v [L^*(\mathbb{K}_{Q_i}
(n))]_{Q_i}\cap\mathbb{K}[n] .
\]
\end{lemma}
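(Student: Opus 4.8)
The plan is to exploit the direct-sum decomposition of $\mathbb{K}(n)$ into its polynomial part and the polar spaces $\mathbb{K}_Q(n)$, together with the fact recalled at the start of this section that these spaces are in direct sum for distinct $Q$ and are left invariant by $L^*$ modulo $\mathbb{K}[n]$. Writing $P=L^*(R)$ with $R\in\mathbb{K}(n)$, I would first use the partial-fraction expansion \eqref{eq:decomp-R} to group $R$ by shift-orbits of its irreducible denominators, obtaining $R=R_{\mathrm{pol}}+\sum_{Q}R_Q$ with $R_{\mathrm{pol}}\in\mathbb{K}[n]$ and $R_Q\in\mathbb{K}_Q(n)$, the finite sum running over pairwise shift-inequivalent square-free polynomials $Q$.

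Applying $L^*$ term by term gives $P=L^*(R_{\mathrm{pol}})+\sum_Q L^*(R_Q)$, where $L^*(R_{\mathrm{pol}})\in\mathbb{K}[n]$ and each $L^*(R_Q)\in\mathbb{K}_Q(n)\oplus\mathbb{K}[n]$ by the invariance property. The crux is that the polar parts of the various $L^*(R_Q)$ lie in the pairwise-disjoint spaces $\mathbb{K}_Q(n)$, so the hypothesis that $P$ is a polynomial forces each polar part to vanish separately; hence $L^*(R_Q)\in\mathbb{K}[n]$ for every $Q$. Since the weak reduction of \cref{WRlemma1bis} acts as the identity on a polynomial (its denominator is divisible by no $Q(n-j)$, so neither loop of the algorithm runs), I may write $L^*(R_Q)=[L^*(R_Q)]_Q\in[L^*(\mathbb{K}_Q(n))]_Q\cap\mathbb{K}[n]$.

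It then remains to identify which orbits $Q$ can contribute a nonzero term. Here I would invoke \cref{WRlemma2}: the space $[L^*(\mathbb{K}_Q(n))]_Q$ is spanned by fractions indexed by $j\in I_0\cup I_r$, and both $I_0$ and $I_r$ are empty unless $Q(n-j)$ shares a root with $p_0$ or with $p_r$ for some admissible $j$, that is, unless $Q$ shares a root-orbit under integer shifts with $p_0p_r$. For such a $Q$, the defining formula of $\mathbb{K}_Q(n)$ depends only on the shift-orbit of $Q$ (as $h$ ranges over all of $\mathbb{Z}$), so $\mathbb{K}_Q(n)=\mathbb{K}_{Q_i}(n)$ for the unique factor $Q_i$ of the shiftless decomposition of $p_0p_r$ lying in that orbit, whence $L^*(R_Q)\in[L^*(\mathbb{K}_{Q_i}(n))]_{Q_i}\cap\mathbb{K}[n]$. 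Summing the contributions over the relevant orbits and adding $L^*(R_{\mathrm{pol}})\in L^*(\mathbb{K}[n])$ yields $P\in E_{\mathrm{pol}}$, as claimed.

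I expect the main obstacle to be the bookkeeping around the shift-orbit identification $\mathbb{K}_Q(n)=\mathbb{K}_{Q_i}(n)$: one must check that the two canonical forms $[\cdot]_Q$ and $[\cdot]_{Q_i}$, which normalize the surviving poles to different shift windows, nonetheless agree on the polynomials at hand, which they do precisely because both act trivially on polynomials, and that every root-orbit of $p_0p_r$ is represented by exactly one $Q_i$, which is guaranteed by the defining properties of the shiftless decomposition.
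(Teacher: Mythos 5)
Your route genuinely differs from the paper's at its first and last steps, and the last one is where the trouble lies. The paper's proof opens with the classical pole-localization fact: if $L^*(R)$ is a polynomial, then every pole of $R$ is, up to an integer shift, a zero of $p_0p_r$; consequently $R$ decomposes at once as $R_\infty+\sum_i R_i$ with $R_i\in\mathbb{K}_{Q_i}(n)$, indexed by the factors of the shiftless decomposition of $p_0p_r$, and no orbit bookkeeping is needed afterwards. You instead decompose $R$ over \emph{all} pole orbits and let \cref{WRlemma2} dispose of the orbits foreign to $p_0p_r$: for such $Q$ the index sets $I_0$ and $I_r$ are empty, so $[L^*(\mathbb{K}_Q(n))]_Q=\{0\}$, and since $L^*(R_Q)$ is a polynomial fixed by $[\cdot]_Q$ it must vanish. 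That substitute for the paper's localization step is correct, and arguably more self-contained, since the paper leaves the localization fact unproved; the steps you share with the paper (the direct sum of the polar spaces forces each $L^*(R_Q)$ to be a polynomial; weak reduction acts as the identity on polynomials) are also right.

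The gap is the identification $\mathbb{K}_Q(n)=\mathbb{K}_{Q_i}(n)$. This equality is false in general, and your justification --- that the definition of $\mathbb{K}_Q(n)$ depends only on the shift-orbit of $Q$ --- applies only when $Q$ and $Q_i$ are integer shifts of one another, which nothing guarantees: a $Q$ coming from $\denom(R)$ may be a \emph{proper divisor} of a shift of $Q_i$. For instance, with $Q_i=n(n^2+1)$ and $R$ having only integer poles, one has $Q=n$, and then $\mathbb{K}_Q(n)\subsetneq\mathbb{K}_{Q_i}(n)$, since $1/(n^2+1)=n/Q_i(n)$ lies in the latter but not the former. Worse, if your orbit representatives are merely square-free (as your wording allows), a single $Q$ can mix roots from the orbit of $Q_i$ with roots foreign to $p_0p_r$; such a $Q$ is not refined with respect to $p_0$ and $p_r$, so \cref{WRlemma1bis,WRlemma2} do not apply to it at all, and no containment in $\mathbb{K}_{Q_i}(n)$ holds either. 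The repair is to take the $Q$'s to be \emph{irreducible} representatives of the shift-orbits of the irreducible factors of $\denom(R)$; then square-freeness, shift-freeness and refinement with respect to $p_0,p_r$ are automatic, and an irreducible $Q$ sharing a root, up to shift, with $p_0p_r$ must satisfy $Q(n)\mid Q_i(n+h)$ for a unique $i$ and some $h\in\mathbb{Z}$. This yields the \emph{inclusion} $\mathbb{K}_Q(n)\subseteq\mathbb{K}_{Q_i}(n)$ (rewrite $n^\ell/Q(n+h')^j$ over the denominator $Q_i(n+h'+h)^j$; the numerator degree stays below $j\deg Q_i$), and the inclusion, not the claimed equality, is all your argument needs: $L^*(R_Q)$ is then a polynomial in $L^*(\mathbb{K}_{Q_i}(n))$, fixed by $[\cdot]_{Q_i}$, hence lies in $[L^*(\mathbb{K}_{Q_i}(n))]_{Q_i}\cap\mathbb{K}[n]$, and summing over the orbits gives $P\in E_{\text{pol}}$.
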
 
\begin{proof}
If $R\in\mathbb K(n)$ is such that $L^*(R)$ is a polynomial, then the
poles of~$R$ must be cancelled by the zeros of $p_0p_r$ or their
shifts.
It follows that $R$ decomposes as
\[
R= R_\infty + \sum_{i=1}^{v} R_i
\]
with $R_i\in\mathbb{K}_{Q_i}(n)$ and $R_\infty\in\mathbb K[n]$. Each
$L^*(R_i)$ has to be a polynomial and thus invariant by $[\cdot]_
{Q_i}$. This concludes the proof.
\end{proof}
By \cref{SRlemmapol}, the vector space $[L^*(\mathbb K[n])]_\infty$ is
generated by $\{[L^*(n^s)]_{\infty}\mid s\in\mathbb N\text{ and }p(s)=0\}$ with
$p$ the indicial polynomial of $L^*$ at infinity.
Generators of each $[L^*(\mathbb K_{Q_i})]_{Q_i}\cap\mathbb K
[n]$ are obtained from the echelon basis used in the strong reduction
with respect to~$Q_i$. This gives a finite set of generators for $[E_
{\text{pol}}]_\infty$, which is easily transformed into a basis by a row
echelon computation. Strong reduction consists in reducing modulo this
basis. The following consequence is as in the polar case.
\begin{lemma}
\label{SRPlemma}
The strong reduction of polynomials reduces every polynomial $P\in L^*(\mathbb{K}(n))$ to zero.
\end{lemma}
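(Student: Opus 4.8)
The plan is to deduce this statement directly from \cref{SRlemmapol}, in exactly the way \cref{SRlemma} is deduced from \cref{WRlemma2} in the polar case: the genuine work has already been done in confining $P$ to the finite-dimensional space whose echelon basis drives the strong reduction, and what remains is a short linear-algebra argument.

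First I would recall what the strong reduction of polynomials does to an input polynomial $P$: it applies the $\mathbb{K}$-linear weak reduction $[\cdot]_\infty$ and then reduces the result modulo the echelon basis of $[E_{\text{pol}}]_\infty$ constructed just before the lemma (mirroring the polar prescription ``reduce $[R]_Q$ using this basis''). Hence it suffices to show that $[P]_\infty$ lies in $[E_{\text{pol}}]_\infty$, since reducing an element of a vector space modulo an echelon basis of that same space returns zero.

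Now the key step. By hypothesis $P$ is a polynomial lying in $L^*(\mathbb{K}(n))$, so \cref{SRlemmapol} applies and gives $P \in E_{\text{pol}}$. Because $[\cdot]_\infty$ is $\mathbb{K}$-linear (\cref{WRpollemma1}) and $[E_{\text{pol}}]_\infty$ is by definition its image on $E_{\text{pol}}$, we obtain $[P]_\infty \in [E_{\text{pol}}]_\infty$. Reducing $[P]_\infty$ against the echelon basis of $[E_{\text{pol}}]_\infty$ therefore eliminates every leading monomial and yields $0$, which is the claim.

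The main obstacle is not in this lemma itself but upstream: the substance sits in \cref{SRlemmapol}, whose argument is that any $R$ with $L^*(R)$ polynomial can only have poles at zeros of $p_0 p_r$ and their integer shifts, forcing $R$ to split across the $\mathbb{K}_{Q_i}(n)$ and $\mathbb{K}[n]$. Granting that, the one point that still needs a little care here is that the finite generating family really spans $[E_{\text{pol}}]_\infty$ --- in particular that applying $[\cdot]_\infty$ to the polynomial generators of each $[L^*(\mathbb{K}_{Q_i}(n))]_{Q_i}\cap\mathbb{K}[n]$ keeps them inside the span --- so that the computed echelon basis is genuinely a basis of $[E_{\text{pol}}]_\infty$; this is what makes membership of $[P]_\infty$ in the space equivalent to its normal form being zero. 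All of this is routine finite-dimensional linear algebra once the finiteness supplied by \cref{SRlemmapol,WRlemma2} is in hand.
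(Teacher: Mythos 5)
Your proof is correct and is essentially the paper's own argument: the paper disposes of this lemma with ``the following consequence is as in the polar case,'' the intended reasoning being exactly your chain --- \cref{SRlemmapol} places $P$ in $E_{\text{pol}}$, linearity of $[\cdot]_\infty$ (\cref{WRpollemma1}) places $[P]_\infty$ in $[E_{\text{pol}}]_\infty$, and reduction modulo the echelon basis of that finite-dimensional space (constructed in the paragraph preceding the lemma) then yields zero. Your explicit remark that the finite generating family must genuinely span $[E_{\text{pol}}]_\infty$ is a point the paper leaves implicit, so your write-up is, if anything, more detailed than the original.
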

\begin{example}
Continuing \cref{expleWR,expleSR,expleSRP},
the polynomial $p(s)=8+4s$ has no positive integer root therefore  $
[L^*(\mathbb{K}[n])]_\infty=\{0\}$. A basis of $[L^*(\mathbb{K}_
{n+1})]_{n+1} \cap \mathbb{K}[n]$ is $\{n\}$ according to example 
\ref{expleSR}.
Therefore $2n/x$ reduces to 0.
\end{example}

\subsection{Canonical Form}

\cref{AlgoPrecomputation} and \Cref{AlgoRedbis}  combine the previous algorithms 
to produce a canonical form.

\begin{algorithm}[t]
\caption{PrecomputeBases}
\begin{algorithmic}
\label{AlgoPrecomputation}
\REQUIRE $Q_{1},\dots,Q_{v}$ polynomials that occur in the shiftless decomposition of $p_0p_r$
\ENSURE The echelon bases $B_{Q_1},\dots,B_{Q_v},B_{pol}$
\STATE $B_{pol} \gets \{\}$
\FOR {$i=1$ \TO $v$ }
\STATE $B_{Q_i} \gets$ Echelon$\left(\left\{\left[L^*\left(\frac{n^l}{Q(n-j)^s}\right)\right]_{Q_i}  \mid l,j,s_j \text{ as in Cor.1 }\right\}\right)$
\STATE $B_{pol} \gets B_{pol}\cup(B_{Q_i}\cap\mathbb K[n])$
\ENDFOR
\STATE $B_{pol} \gets$ Echelon($B_{pol}\cup \{[L^*(n^s)]_\infty \mid s$ integer root of $p \})$

\RETURN  $B_{Q_1},\dots,B_{Q_v},B_{pol}$
\end{algorithmic}
\end{algorithm}

\begin{algorithm}[t]
\caption{Reduction of rational functions $[\cdot]$}
\begin{algorithmic}
\label{AlgoRedbis}
\REQUIRE $R$ and $B_{Q_1},\dots,B_{Q_v},B_{pol}$ computed by Algorithm
\ref{AlgoPrecomputation}
\ENSURE a reduced form of $R$ by $L^*(\mathbb{K}(n))$.
\STATE Decompose $R$ as $P_\infty + \sum_{i=1}^v R_i$ with
$R_i\in\mathbb K_{Q_i}$
\STATE\algorithmicfor\ {$i=1$ \TO $v$ }\algorithmicdo\ 
 $R_{Q_i} \gets \operatorname{StrongReduce}([R_i]_{Q_i},B_{Q_i})$
\STATE $R \gets P_\infty + \sum_{i=1}^s R_{Q_i}$
\STATE Write $R= P + \tilde{R}$ with $P$ a polynomial and $\deg(\tilde{R})<0$
\STATE $P \gets \operatorname{StrongReduce}([P]_\infty,B_{pol})$
\RETURN  $P + \tilde{R}$
\end{algorithmic}
\end{algorithm}

\begin{theorem}
\Cref{AlgoRedbis} computes a canonical form.
\end{theorem}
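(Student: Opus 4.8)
The plan is to verify the three defining properties of \cref{CanForm} for the map $[\cdot]$ realized by \cref{AlgoRedbis}: that it is $\mathbb{K}$-linear, that $R-[R]\in L^*(\mathbb{K}(n))$ for every $R$, and that $[L^*(R)]=0$ for every $R$. These last two already force idempotence (if $R-[R]=L^*(S)$ then $[R]-[[R]]=[L^*(S)]=0$), so no separate fixed-point argument is needed. Linearity is the routine part: the opening splitting $R=P_\infty+\sum_i R_i$ into its polynomial part and its components in the spaces $\mathbb{K}_{Q_i}(n)$ is projection onto a direct sum, the weak reductions $[\cdot]_{Q_i}$ and $[\cdot]_\infty$ are $\mathbb{K}$-linear by \cref{WRlemma1bis,WRpollemma1}, and reduction modulo a fixed echelon basis is $\mathbb{K}$-linear; $[\cdot]$ is their composition.

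For the membership $R-[R]\in L^*(\mathbb{K}(n))$ I would follow the algorithm stage by stage and record what is subtracted. In the polar weak reduction the quantity removed from each $R_i$ is literally $L^*$ of an explicit fraction in $\mathbb{K}_{Q_i}(n)$, so it lies in $L^*(\mathbb{K}(n))$; by \cref{WRlemma1bis} the remainder differs from the input by such an element up to a polynomial term. In the polar strong reduction the quantity removed is a $\mathbb{K}$-combination of $B_{Q_i}$, and since each basis vector is a weak-reduced $[L^*(g)]_{Q_i}$, linearity lets me collapse the combination into a single $[L^*(g_i)]_{Q_i}=L^*(g_i-T_i)-P_i$ with $g_i,T_i\in\mathbb{K}_{Q_i}(n)$ and $P_i\in\mathbb{K}[n]$; thus it too is an element of $L^*(\mathbb{K}(n))$ modulo a polynomial correction $P_i$. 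The polynomial weak reduction removes elements of $L^*(\mathbb{K}[n])$ by \cref{WRpollemma1}. The role of the final polynomial strong reduction is precisely to absorb all the polynomial corrections accumulated above: because $B_{pol}$ is built in \cref{AlgoPrecomputation} to contain $\bigcup_i B_{Q_i}\cap\mathbb{K}[n]$ together with the $[L^*(n^s)]_\infty$, the corrections $P_i$ are cancelled against matching contributions in this stage, so the net difference $R-[R]$ lands in $L^*(\mathbb{K}(n))$ on the nose.

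For $[L^*(R)]=0$ I would exploit that the $\mathbb{K}_{Q}(n)$ are in direct sum and that $L^*$ preserves each of them modulo $\mathbb{K}[n]$. Consequently, for $F=L^*(R)$ the $\mathbb{K}_{Q_i}(n)$-component $F_i$ is the projection of $L^*(R_{Q_i})$, so $F_i$ agrees with an element of $L^*(\mathbb{K}_{Q_i}(n))$ up to a polynomial; by \cref{WRlemma2} its weak reduction lands in $[L^*(\mathbb{K}_{Q_i}(n))]_{Q_i}=\operatorname{span}(B_{Q_i})$, and by \cref{SRlemma} the strong polar reduction annihilates the fractional part of each $F_i$. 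Hence after the polar stage only a polynomial $P$ survives, and tracking the surviving polynomial shows $P\in E_{\text{pol}}$ via \cref{SRlemmapol} (each class contributes a term of $[L^*(\mathbb{K}_{Q_i}(n))]_{Q_i}\cap\mathbb{K}[n]$, plus $L^*(R_\infty)\in L^*(\mathbb{K}[n])$). Since $B_{pol}$ spans $[E_{\text{pol}}]_\infty$, \cref{SRPlemma} then drives $P$ to $0$, giving $[F]=0$.

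The hard part, and the step I expect to be the real obstacle, is the interaction between the polar and polynomial stages through their polynomial parts. Weak reduction of a polar component does not return a pure proper fraction: applying $L^*$ to a fraction produces a polynomial tail, so the outputs of $[\cdot]_{Q_i}$ and the vectors of $B_{Q_i}$ carry polynomial parts (as already visible in \cref{expleSR}). Individually, the strong polar reductions and the $B_{Q_i}\cap\mathbb{K}[n]$ part of the strong polynomial reduction subtract quantities that are \emph{not} in $L^*(\mathbb{K}(n))$; it is only their combined effect that is. The crux is therefore to show that the polynomial residues generated during the strong polar reductions are exactly those spanned by $\bigcup_i B_{Q_i}\cap\mathbb{K}[n]\subseteq B_{pol}$, so that their inclusion in $B_{pol}$ guarantees \emph{simultaneously} that the net subtraction stays inside $L^*(\mathbb{K}(n))$ and that an input already in $L^*(\mathbb{K}(n))$ is reduced all the way to $0$. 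Pinning down this cancellation — the individual reductions themselves being supplied by the lemmas of \cref{sec:reduction} — is where the argument must be made precise.
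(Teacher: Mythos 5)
Your outline---linearity, $[L^*(R)]=0$, and $R-[R]\in L^*(\mathbb{K}(n))$---is the same as the paper's, and your treatment of linearity and of $[L^*(R)]=0$ via \cref{SRlemma} and \cref{SRPlemma} matches the paper's (very short) proof. The problem is your third point, which contains a genuine gap that you yourself flag as ``where the argument must be made precise'': you never prove it, and the obstacle you describe comes from a misreading of what the algorithm subtracts. You take the generators of $B_{Q_i}$ to be of the form $[L^*(g)]_{Q_i}=L^*(g-T)-P$ with the polynomial part $P$ stripped off, so that the strong polar reduction subtracts elements lying only in $L^*(\mathbb{K}(n))+\mathbb{K}[n]$; you are then forced to postulate a cancellation mechanism in which $B_{pol}$ absorbs the accumulated corrections $P_i$. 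That mechanism is not established (and reduction modulo an echelon basis does not selectively cancel ``matching contributions''; it just reduces whatever polynomial is present), so as written the proof of $R-[R]\in L^*(\mathbb{K}(n))$ is missing.

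In fact no cross-stage cancellation is needed. The weak reductions (\cref{WRpolesR}, and likewise \cref{WRpol}) only ever perform assignments of the form $R\gets R-L^*(\cdot)$, and their outputs keep the polynomial tails created by these subtractions. (The statement of \cref{WRlemma1bis}, which writes $R-[R]_Q=P+L^*(T)$, is what misled you; but the algorithm listing, the step ``Write $R=P+\tilde R$'' in \cref{AlgoRedbis}, and \cref{expleSR}---where $[L^*(1/(n-2))]_{n+1}=4n+x^2/(n+1)-x^2/(n-1)$ retains the term $4n$, and where $B_{Q_i}\cap\mathbb{K}[n]$ is nontrivial precisely because such polynomial parts are kept---show that polynomial parts travel attached to the fractions.) Consequently each generator $[L^*(g)]_{Q_i}$ equals $L^*(g)-L^*(T)=L^*(g-T)$ \emph{exactly}, hence lies in $L^*(\mathbb{K}(n))$; the same then holds for every element of the echelon basis $B_{Q_i}$, for every element of $B_{Q_i}\cap\mathbb{K}[n]$, and for every $[L^*(n^s)]_\infty$ placed in $B_{pol}$ by \cref{AlgoPrecomputation}. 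So every quantity subtracted at any stage of \cref{AlgoRedbis} is an element of $L^*(\mathbb{K}(n))$, and $R-[R]\in L^*(\mathbb{K}(n))$ is immediate---this is exactly the paper's one-line argument that only functions in the image of $L^*$ are ever subtracted. The ``polynomial corrections $P_i$'' you worry about never exist as separately subtracted quantities, so the crux you identify dissolves; with that repaired, the rest of your argument is correct and coincides with the paper's proof.
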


\begin{proof}
\Cref{AlgoRedbis} is linear as every step is linear. By \cref{SRlemma}
and \cref{SRPlemma}, $[L^*(\mathbb{K}(n)]$ reduces to 0 and $[R]-R \in
L^*(\mathbb{K}(n))$ as only functions in this image were subtracted to
$R$.
\end{proof}

\section{Certificates}\label{sec:certificates}
Reduction-based creative telescoping
algorithms allow to find a telescoper without having to compute
an associated certificate. This has led to faster algorithms as
certificates are known to be larger than telescopers~\cite{bostan:hal-00777675}.
This approach makes sense in the differential
case when it is known in advance that the integral of a certificate
over a cycle that avoids singularities is equal to zero. The framework
is not as favorable for sums. Indeed, it is necessary to detect
whether
the certificate has poles in the range of summation and
it is often unclear whether the certificate becomes~0 at the boundaries of the summation interval.
 
It is however possible to compute the certificates \emph{in a compact
way}
during the execution of our algorithm, with almost no impact on the
execution time. 
The idea
is to make the computation and storage of certificates efficient by
storing them as directed acyclic graphs (dags) rather than
operators with normalized rational function coefficients. These dags
have a number of internal nodes of the same order as the number of
operations performed when computing the telescoper, so that their
computation does not burden the complexity. They can then be evaluated
at the endpoints of the range of summation, or expanded in Laurent
series there. 

\subsection{Computation and structure of certificates}
\Cref{DecMonome2,red-R} show how one
can compute telescopers without computing certificates. We now show
how to compute the certificates simultaneously. These are obtained as
sums of monomials in the~$\partial_i$ multiplied by rational
functions. These rational functions have denominators of the
form~$Q
(n-j)^s$, with integers~$j,s$ and polynomials $Q$ that 
occur either in the shiftless
decomposition of~$p_0p_r$ or in the denominator of the
operator~$A_1\in\hat{\mathbb K}\langle S_n\rangle$ such
that~$1=A_1\gamma$, or in the denominator of one of the
operators~$B_i$ from \cref{propCT}. Those sums are not
reduced to a common denominator. They share many common coefficients
and denominators that are efficiently compacted into dags by sharing
common subexpressions (this is how Maple stores them by default).

The starting point is the cyclic vector $\gamma\in\mathbb A/\mathcal
I$ and the operator $A_1\in\hat{\mathbb K}\langle S_n\rangle$ such
that $1=A_1\gamma$. By Euclidean division by $\Delta_n$ on the left,
\[A_1=R_0+\Delta_n g_0.\]
In general,
the certificate $G_{\boldsymbol\alpha}$ in \cref{DecMonome2} is stored
as an
(unreduced) element $g_{\boldsymbol\alpha}$
of $\mathbb A$ such that $G_{\boldsymbol\alpha}=g_
{\boldsymbol\alpha}\gamma\bmod\mathcal I$. The computation is incremental. It
is initialized with $R_0$ and $g_0$ as above, corresponding to
$\boldsymbol\alpha=\boldsymbol0$. In many cases, the vector~1 is
cyclic, so that one can take~$\gamma=1=A_1=R_0$ and
$g_0=0$. Otherwise, the denominators of~$R_0$ and~$g_0$ are shifts of
the denominator of~$A_1$.

Next, \cref{red-R} leads to the computation 
of the canonical form of the rational function $\phi_i(R_
{\boldsymbol\beta})$. This
computation is performed via a sequence of reductions which consist of
subtractions of elements in $L^*(\mathbb K(n))$. Keeping track of
these rational functions (without normalizing them) gives the
canonical form $R_{\boldsymbol\alpha}$ as
\[
\phi_i(R_{\boldsymbol\beta})=R_{\boldsymbol\alpha}-L^*(c_
{\boldsymbol\alpha})
\]
for some rational function $c_{\boldsymbol\alpha}\in\hat{\mathbb K}$.
Both~$R_{\boldsymbol\alpha}$ and~$c_{\boldsymbol\alpha}$ have
denominators that are shifts of those of~$R_{\boldsymbol\beta}$ or of
factors of~$p_0p_r$.
In view of \cref{eq:def-varphii}, it follows that
\[
\boldsymbol\partial^
{\boldsymbol\alpha}=\partial_i\boldsymbol\partial^
{\boldsymbol\beta}=R_{\boldsymbol\alpha}\gamma+\Delta_n\left(
\partial_iG_{\boldsymbol\beta}+P_{B_i}(\sigma_i(R_
{\boldsymbol\beta}))\gamma+P_L(c_{\boldsymbol\alpha})\gamma\right)\bmod \mathcal I.
\]
Thus, the certificate $G_{\boldsymbol\alpha}$ is obtained as
$g_{\boldsymbol\alpha}\gamma$
with
\begin{equation}\label{eq:certificate}
g_{\boldsymbol\alpha}=\partial_ig_{\boldsymbol\beta}+P_{B_i}(\sigma_i
(R_{\boldsymbol\beta}))+P_L(c_{\boldsymbol\alpha}).
\end{equation}
This proves the claim concerning the structure of the certificates and
the factors of their denominators.  In our implementation, $\partial_i$
is commuted with the coefficients of the certificate~$g_
{\boldsymbol\beta}$ only. If desired, one can further 
use $\partial_i=B_i\bmod \mathcal I$
so as to write $g_{\boldsymbol\alpha}$ as an operator in
$S_n$ only.

\subsection{Evaluation of the certificates}\label{sec:eval-cert}
The output of \cref{AlgoCT} is a set of elements~$T$ of the
telescoping ideal, which means that 
\[T(F)=g(F)(n+1,x_1,\dots,x_m)-g(F)(n,x_1,\dots,x_m),\]
with $g$ a certificate as described above. Summing over~$n$, the
right-hand side telescopes and only the values of the certificate~$g
(F)$ at the endpoints are needed. 

It is possible to prove that these evaluations are
zero without any evaluation in two important cases. First, if the
summand $F$
has finite support (e.g., binomial sums), then the sum of any certificate over $\mathbb{Z}$ will be zero provided it has no pole in the summation range. 
The second case is when one can prove that $R(n)\gamma
(F),\dots,R(n)S_n^{r-1}\gamma(F)$ tend to zero as $n$ tends to  $\pm
\infty$ for any rational function $R\in\mathbb{K}[n]$ (as in the introductory example). Then again the sum of any certificate over $\mathbb{Z}$ will be zero provided it has no pole in the summation range.

\subsection{Integer pole detection}
By its very nature, the method of creative telescoping requires the
certificate not to have poles in the range of summation, so that
telescoping can occur. 
The structure of the certificates described above does not allow the
efficient computation of its denominator exactly. However it is
possible to
compute a multiple of it by taking the least common
multiple of the denominators of every rational function in the
representation. This can be done efficiently by performing the
computation on the dag representation of the certificates.

From this multiple of
the denominator of the certificate, one can compute the set of roots that lie in the summation range;
this amounts to computing the roots that differ from the
endpoints of the summation range by an integer.
If that set is not empty, then one can
compute a Laurent series expansion of the certificate at any
point to check whether it is a pole or not, again by exploiting the
dag
representation of the rational function coefficients.

\subsection{Examples}

\subsubsection{Neumann's Addition Theorem for Bessel functions}
On input $S(x)=\sum_{n=1}^\infty J_n(x)^2$ where $J_n(x)$ is the
Bessel
function of the first kind, \cref{alg:myalg} outputs the
telescoper~$\partial_x$ and a
certificate $G$ in dag form that
exhibits poles at $n\in\{-1,0,1,2\}$. Our implementation produces
the
polynomial $n(n+1)(n-2)(n-1)$ containing this information.
In this example, the certificate is small enough that it can easily be
normalized and one gets its value as
\[-\frac{x }{4 \left(n +1\right)}S_{n}^{2}
+\frac{n
+1}{x}S_{n}-\frac{8 n^{2}-x^{2}+8 n}{4 x \left(n +1\right)},\]
showing that the poles at~1 and~2 vanish in the normalization. Without
normalizing the certificate, one can still evaluate the
series expansions of the certificate at those points to
establish that it
has a finite limit there, making the summation legitimate.
At $n=1$, the evaluation 
is found to be
\[-J_0(x)J_1(x)=\frac12(J_0(x)^2)'.\]
Thus we have proved 
\[\frac{d}{dx}(\frac12J_0^2(x)+J_1(x)^2+J_2(x)^2+\dotsb)=0.\]
This shows that the sum is constant and the value is revealed by its
value at~0, which follows from $J_k(0)=0$ for $k>0$ and $J_0(0)=1$, so
that in the end, we recover the classical identity~\cite[10.23.3]{OlverLozierBoisvertClark2010}
\[1=J_0(x)^2+2\sum_{k\ge1}{J_k(x)^2}.\]

\subsubsection{Apéry's Sequence}
The classical sum
\begin{equation}\label{eq:Apery}
A_n=\sum_{k=0}^n\binom{n}{k}^2\binom{n+k}{k}^2,
\end{equation}
used by Apéry in his proof of the irrationality of~$\zeta(3)$,
has telescoper
\begin{equation}\label{eq:telesc-Apery}
\mathcal T:=\left(n +2\right)^{3} S_{n}^{2}-\left(2 n +3\right) \left
(17
n^{2}+51 n +39\right) S_{n}+\left(n +1\right)^{3}.
\end{equation}
The singularities of the certificate obtained by our implementation
are at~$k\in\{n+1,n+2\}$. Indeed, once normalized, the certificate 
is found to be
\[\mathcal C:=\frac{4 \left(3 k-4 n -8 \right) k^{4}}{\left(k -n
-2\right)^{2}} S_{n}
-\frac{4 k^{4} \left(3 k +4 n +4\right)}{\left(k-n -1 \right)^{2}}.\]
Let $U_{n,k}$ denote the product of binomials in the sum. 
Summing $\mathcal TU_{n,k}$ from~$k=0$ to~$k=n+2$ gives~$\mathcal T
(A_n)$. If telescoping is legitimate, then the values of the endpoints
are the values of $\mathcal CU_{n,k}$ at $k=0$ and $k=n+3$, that are
both easily checked to be~0. For this to allow to conclude that
$\mathcal T(A_n)=0$, it is then sufficient to check that $\mathcal CU_
{n,k}$ is not singular at~$k=n+1$ and~$k=n+2$, even though~$\mathcal
C$ is. Indeed, a series expansion of the
evaluation
of~$\mathcal CU_{n,k}$ at~$k=n+1$ and~$k=n+2$ is
possible for our
implementation, and finds that the sequence has a finite limit
there, which concludes the proof that the telescoper \cref{eq:telesc-Apery}
cancels the sum \cref{eq:Apery}; see also~%
\cite{ChyzakMahboubiSibut-PinoteTassi2014} for more on these issues.

\subsection{A larger example}
The computation of telescoper and certificate for the sum
\begin{equation}\label{eq:larger-example}
\sum_{n=0}^\infty \frac{(4n+1)(2n)!}{n!^2 2^{2n}\sqrt{x}}J_{2n+1/2}(x)P_{2n}(u)
\end{equation}
takes less than~15~sec. with our current implementation (see 
\cref{table:timings}). The telescopers are quite small:
\[(1-u^2)\partial_u + xu\partial_x,\quad
 (u^3 - u)\partial_u^2 + 
 (1+u^2)\partial_u - u^3x^2.\]
In this example, not normalizing the certificates during their
computation has a cost. The actual certificates, once reduced
by the Gröbner basis
of the annihilating ideal of the summand, are not very large. They are
easily computed by Koutschan's program. Still, the corresponding dags are large.
Nonetheless, it takes less than 1~sec. to compute a multiple of
the denominators of the certificates and detect that they do not have
integer roots. Evaluating the certificates at~$n=0$ using their dag
representation takes less than 2~min. and proves that the telescopers
cancel the sum in \cref{eq:larger-example}.

\section{Implementation}
This algorithm is implemented in Maple\footnote{The implementation is
available at
\url{https://github.com/HBrochet/CreativeTelescoping.git}, together
with sessions of examples.}.
\Cref{table:timings} gives a comparison of our code with Koutschan's
heuristic (HF-FCT) and
Chyzak's
algorithm (HF-CT)\footnote{The code was run
on a Intel Core i7-1265U with 32 GB of RAM.}. They
are both implemented in Koutschan's \textsf{HolonomicFunctions}
package in
Mathematica \cite{Koutschan2009}. The column `redctsum' corresponds to
our
algorithm.

These programs have been executed on a list of 21 easy
examples that were
compiled by Koutschan, as well as more difficult ones given in 
\cref{impl:ligne0,impl:ligne1,impl:ligne2,impl:ligne3,impl:ligne4,impl:ligne5,impl:ligne6,impl:ligne6.5,impl:ligne7,impl:ligne8}
below.
\cref{impl:ligne0} comes from recent identities
involving determinants~\cite{AmdeberhanKoutschanZeilberger23},
\cref{impl:ligne1,impl:ligne2,impl:ligne6.5,impl:ligne7} have been chosen because
they looked natural to experiment with,
\cref{impl:ligne4} is a harder example found in Koutschan's list, 
\cref{impl:ligne3} as well as \cref{impl:ligne6} and it special case 
\cref{impl:ligne5}
come from the
classical book of integral and
series by Prudnikov \emph{et al.}~\cite{PrudnikovBrychkovMarichev1986a}, 
and finally \cref{impl:ligne8}
is an example where Koutschan's heuristic does not stop as it does not
guess correctly the form of the ansatz to use~\cite{ChenHouHuangLabahnWang2022}.

\begin{table}
\begin{center}
\begin{tabular}{l | l | c || r}
   &  HF-CT & HF-FCT & redctsum  \\  \hline 
   easy examples  & 6.7s & 7s & 0.9s  \\
   \cref{impl:ligne0}&   101s  &  49s  & 0.8s  \\
   \cref{impl:ligne1} & 52s & 4s & 1.4s \\
   \cref{impl:ligne2} &   62s  & 1.7s & 5.7s \\
   \cref{impl:ligne3} &  4.9s & 1.4s & 10.3s \\
   \cref{impl:ligne4} & 4.9s & 1.4s & 13.5s       \\
   \cref{impl:ligne5} &  1200s  & 13s & 205s\\
   \cref{impl:ligne6} & $>6$h & 108s & 3338s \\
   \cref{impl:ligne6.5} & 1703s & 4.7s & 580s \\
   \cref{impl:ligne7} & $>1$h & 3.2s$(^*)$ & $>1$h \\
   \cref{impl:ligne8} & $>1$h & $>1$h & 0.4s
 \end{tabular}
\end{center}
\caption{Timings. The notation $(^*)$ means that  we could not check
whether the telescopers
returned by HF-FCT were minimal.\label{table:timings}}
\end{table}

\begin{align}
&\label{impl:ligne0} \sum_{j=1}^n \binom{m+x}{m-i+j}c_{n,j} \quad \text{ where $c_{n,j}$ satisfies recurrences of order 2~\cite[p.~6]{AmdeberhanKoutschanZeilberger23}} \\ 
&\label{impl:ligne1} \sum_{n=0}^\infty C_{n}^{(k)}(x)C_{n}^{(k)}(y)
\frac{u^n}{n!}  \\
&\label{impl:ligne2} \sum_{n=0}^\infty J_n(x)C_n^{(k)}(y)\frac{u^n}{n!} \\
&\label{impl:ligne3} \sum_{k=0}^\infty(-1)^k(4k+1)J_{2k+1/2}(w)P_
{2k}(z) \\
&\label{impl:ligne4} \sum_{n=0}^\infty \frac{(4n+1)(2n)!}{n!^2 2^{2n}\sqrt{x}}J_{2n+1/2}(x)P_{2n}(u) \\
&\label{impl:ligne5} \sum_{k=0}^\infty \frac{(b+3/2)_k}{(3/2)_k(b+1)_k}J_k^{(1/2,b)}(x)J_k^{(1/2,b)}(y) \\
&\label{impl:ligne6} \sum_k \frac{(a+b+1)_k}{(a+1)_k(b+1)_k}J_k^{(a,b)}(x)J_k^{(a,b)}(y) \\
&\label{impl:ligne6.5} \sum_{n=0}^\infty P_n(x)P_n(y)P_n(1/2) \\
&\label{impl:ligne7} \sum_{n=0}^\infty P_n(x)P_n(y)P_n(z) \\
&\label{impl:ligne8} \sum_y \frac{4x + 2}{(45x + 5y + 10z + 47)(45x + 5y + 10z + 2)(63x - 5y + 2z + 58)(63x - 5y + 2z - 5)} 
\end{align}
The family $(S_r)$ is defined by~\cite{GillisReznickZeilberger}
\begin{equation}\label{eq:Sr}
S_r = \sum_{k=0}^n \frac{(-1)^k(rn - (r-1)k)!(r!)^k}{(n - k)!^r k!}.
\end{equation}
For any $r$, our algorithm produces a minimal telescoper of
order $r$ and degree $r(r-1)/2$. The timings are reported in \cref{table:timings-Sr}.
It is unclear why the heuristic HF-FCT does not perform well on this
family.

On most of these examples, the main part of the time of the
computation is spent in the
reductions in the call to CanonicalForm in 
\cref{AlgoCT}. For the two similar sums of 
\cref{impl:ligne3,impl:ligne4}, almost half of the time is spent in
\cref{AlgoPrecomputation} performing the reductions needed to compute
the bases for the strong reduction. This step is crucial to ensure
that the minimal order elements in the telescoping ideal are found.

There are cases, like \cref{impl:ligne6} and the family $S_r$,
where the
intermediate rational functions $R_{\mathbf{\alpha}}$ in 
\cref{DecMonome2} become much larger than the telescopers found
after linear algebra on them. In such situations, the direct,
non-incremental approach taken by HF-CT and HF-FCT can be more
efficient, by avoiding an unnecessarily large basis of rational
functions.

\begin{table}
\begin{center}
\begin{tabular}{l | l | c || r}
   &  HF-CT & HF-FCT & redctsum  \\  \hline 
$S_{6}$ & 11s & 64s & 0.4s \\
$S_{7}$ & 32s & 331s & 0.6s \\
$S_{8}$ & 106s & 1044s & 1.0s \\
$S_{9}$ & 325s & 3341s & 2.5s \\
$S_{10}$ & 1035s & $>$1h & 5.7s
\end{tabular}\qquad\qquad
\end{center}
\caption{Timings on the family $S_r$ from \cref{eq:Sr}. 
\label{table:timings-Sr}}
\end{table}

\bibliographystyle{plain}
\bibliography{ReductionBasedCreativeTelescopingForSummation}
\end{document}